\newcommand{\eps}{\varepsilon}
\newcommand{\upen}[1]{\llceil#1\rrceil}
\title{Robust Geometric Spanners\thanks{A preliminary version of this
   paper appears in the proceedings of the \emph{29th ACM Symposium on
   Computational Geometry (SoCG~2013)}.}}
\author{Prosenjit Bose\thanks{School of Computer Science, Carleton University, 1125 Colonel By Drive, Ottawa, CANADA, K1S~5B6 ({\tt \{jit,morin,michiel\}@scs.carleton.ca}).} \and
        Vida Dujmovi\'c\thanks{School of Mathematics and Statistics and Department of Systems and Computer Engineering, Carleton University, 1125 Colonel By Drive, Ottawa, CANADA, K1S~5B6 ({\tt \{jit,morin,michiel\}@scs.carleton.ca}).}\and
        Pat Morin\footnotemark[2] \and
        Michiel Smid\footnotemark[2]}
\begin{document}
\maketitle

\begin{abstract}
  Highly connected and yet sparse graphs (such as expanders or graphs
  of high treewidth) are fundamental, widely applicable and extensively
  studied combinatorial objects.  We initiate the study of such highly
  connected graphs that are, in addition, geometric spanners.  We define
  a property of spanners called robustness.  Informally, when one removes
  a few vertices from a robust spanner, this harms only a small number of
  other vertices.  We show that robust spanners must have a superlinear
  number of edges, even in one dimension.  On the positive side, we give
  constructions, for any dimension, of robust spanners with a near-linear
  number of edges.
\end{abstract}

\begin{keywords}
spanners, stretch-factor, spanning-ratio, tree-width, connectivity, expansion
\end{keywords}

\begin{AMS}
68M10, 05C10, 65D18
\end{AMS}

\pagestyle{myheadings}
\thispagestyle{plain}
\markboth{P. BOSE, V.~DUJMOVI\'C, P. MORIN, AND M. SMID}{ROBUST GEOMETRIC SPANNERS}

\section{Introduction} 
The cost of building a network, such as a computer network or a network
of roads, is closely related to the number of edges in the underlying
graph that models this network.  This gives rise to the requirement
that this graph be sparse.  However, sparseness typically has to be
counter-balanced with other desirable graph (that is, network design)
properties such as reliability and efficiency.

The classical notion of graph connectivity provides some guarantee of
reliability. In particular, an $r$-connected graph remains connected as
long as fewer than $r$ vertices are removed. However these graphs are
not sparse for large values of $r$;  an $r$-connected graph with $n$
vertices has at least $rn/2$ edges.

For many applications, disconnecting a small number of nodes from the
network is an inconvenience for the nodes that are disconnected, but has
little effect on the rest of the network.  In contrast, disconnecting
a large part (say, a constant fraction) of the network from the rest
is catastrophic.
For example, it may be tolerable that the failure of one network component
cuts off internet access for the residents of a small village. However,
the failure of a single component that eliminates all communications
between North America and Europe would be disastrous.

This global notion of connectivity is captured in graph theory by
expanders and graphs of high treewidth, each of which can have a linear
number of edges.  These two properties of graphs have an enormous number
of applications and have been the subject of intensive research for
decades.  See, for example, the book by Kloks \cite{k94} or the surveys
by Bodlaender \cite{b98,b07} on treewidth and the survey by Hoory, Linial,
and Wigderson \cite{hlw06} on expanders.

In this paper, we consider how to combine this global notion of
connectivity with another desirable property of geometric graphs:
low spanning ratio (a.k.a., low stretch factor or low dilation),  the
property of approximately preserving Euclidean distances between vertices.
In particular, given a set of $n$ points in $\R^d$, we study the problem
of constructing a graph on these points where the weights of the edges
are given by the Euclidean distance between their endpoints. We wish
to construct a graph such that
\begin{enumerate}
  \item The graph is sparse: the graph has $o(n^2)$ edges
  \item The graph is a spanner: (weighted) shortest paths in the graph
    do not exceed the Euclidean distance between their endpoints by more
    than a constant factor; and
  \item The graph has high global connectivity: removing a small number
    of vertices leaves a graph in which a set of vertices of size $n-o(n)$
    are all in the same component and all vertices in this set have
    spanning paths between them.
\end{enumerate}

This is the first paper to consider combining low spanning ratio
with high global connectivity.  This is somewhat surprising, since
many variations on sparse geometric spanners have been studied,
including spanners of low degree \cite{abcghsv08,cc10,s06}, spanners
of low weight \cite{bcfms10,dn97,gln02}, spanners of low diameter
\cite{ams94,ams99}, planar spanners \cite{accdsz96,c89,dj89,kg89},
spanners of low chromatic number \cite{bccmsz09}, fault-tolerant
spanners \cite{abfg09,cz04,lns02,l99}, low-power spanners
\cite{aack11,ss10,wl06}, kinetic spanners \cite{ab11,abg10},
angle-constrained spanners \cite{cs10}, and combinations of these
\cite{admss95,as97,bfrv12,bgs05,bsx09,cc10b}.  The closest related
work is that on fault-tolerant spanners \cite{abfg09,cz04,lns02,l99},
but $r$-fault-tolerance is analogous to the traditional definition
of $r$-connectivity in graph theory and suffers the same shortcoming:
every $r$-fault-tolerant spanner has $\Omega(rn)$ edges.

In the next few subsections, we formally define robust spanners
and discuss, at a more rigorous level, the relationship between
robust-spanners, fault-tolerant spanners, and expanders.  From this point
onwards, all graphs we discuss have vertices that are points in $\R^d$;
$n$ refers to the number points/vertices;  all distances between pairs of
points are Euclidean distances; and any shortest path in a graph refers
to the shortest (Euclidean) path that uses only edges of the graph.

\subsection{Robustness}

Let $V\subset \R^d$ be a set of $n$ points in $\R^d$.  An undirected
graph $G=(V,E)$ is a (geometric) \emph{$t$-spanner of $V^-\subseteq V$}
if, for every pair $x,y\in V^-$,
\[
  \frac{\|xy\|_G}{\|xy\|} \le t \enspace ,
\]
where $\|xy\|$ denotes the Euclidean distance between $x$ and $y$ and
$\|xy\|_G$ denotes the length of the Euclidean shortest path from $x$
to $y$ that uses only edges in $G$.  Here we use the convention that
$\|xy\|_G=\infty$ if there is no path, in $G$, from $x$ to $y$.  We say
simply that $G$ is a \emph{$t$-spanner} if it is a $t$-spanner of $V$
(i.e., $V^-=V$).  We point out that, although $t$ is always at least 1,
it need not be an integer.

Geometric $t$-spanners have been studied extensively and have applications
in robotics, graph theory, data structures, wireless networks, and network
design.  A book \cite{ns07} and handbook chapter \cite{e99} provide
extensive discussions of geometric $t$-spanners and their applications.

For a graph $G=(V,E)$ and a subset $S\subseteq V$ of $G$'s vertices, we
denote by $G\setminus S$ the subgraph of $G$ induced by $V\setminus S$.
A graph $G$ is an \emph{$f(k)$-robust $t$-spanner} of $V$ if, for every
subset $S\subseteq V$, there exists a superset $S^+\supseteq S$, $|S^+|\le
f(|S|)$, such that $G\setminus S$ is a $t$-spanner of $V\setminus S^+$.

An example is shown in \figref{grid} which suggests that the
$\sqrt{n}\times\sqrt{n}$ grid graph is an $O(k^2)$-robust 3-spanner.
The set $S^+$ in this example is obtained by choosing ``disjoint''
squares that cover the vertices of $S$ and adding to $S^+$ any vertices
contained in these squares.  A short path between any two vertices in
$V\setminus S^+$ is obtained by starting with some shortest path in $G$
between these two vertices and then routing around any of the square
holes encountered by this path. (A proof that the grid graph is indeed
an $O(k^2)$-robust 3-spanner is sketched in \secref{summary}.)

\begin{figure}
  \begin{center}
    \includegraphics{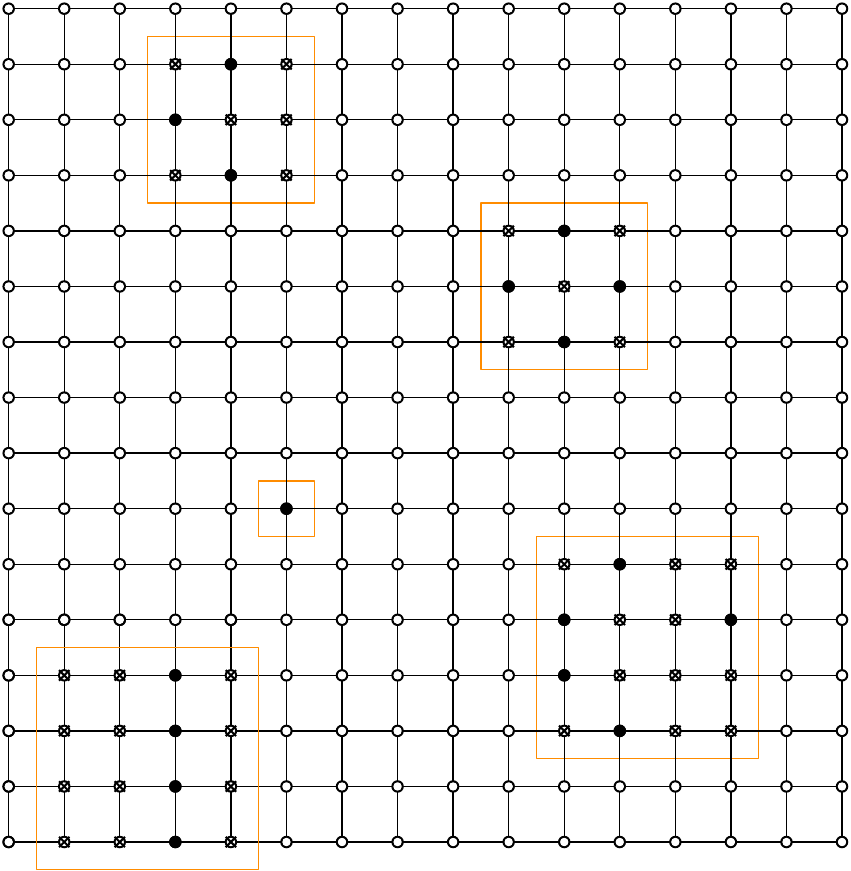}
  \end{center}
  \caption{From the set $S$ (whose elements are denoted by \textbullet)
  we find a superset $S^+$ (whose elements are denoted by $\times$ and
  \textbullet) so that $G\setminus S$ is a 3-spanner of $G\setminus S^+$
  (whose vertices are denoted by $\circ$).}
  \figlabel{grid}
\end{figure}

One can think of an $f(k)$-robust $t$-spanner in terms of network
reliability.  If a network is an $f(k)$-robust $t$-spanner, and $k$ nodes
of the network fail, then the network remains a $t$-spanner of $n-f(k)$
of its nodes.  Intuitively, most of the network survives the removal of
$k$ nodes, provided that $k$ is small enough that $f(k)\ll n$.

A slightly stronger version of robustness, which is achieved by some
of our constructions, requires that $G\setminus S^+$ induces a spanner.
Under this definition, the graph $G\setminus S^+$ must be a $t$-spanner of
$V\setminus S^+$. For example, the grid graph in \figref{grid} satisfies
this stronger definition since the vertices inside the squares are not
used in the short paths between vertices outside the squares.  In some
applications, this stronger definition may be preferable since the nodes
in $S^+$, which no longer gain the full benefits of the network $G$,
are not required to help with the routing of messages between nodes of
$V\setminus S^+$.  (An open problem related to this stronger definition
of robustness is discussed in \secref{summary}.)

\subsection{Robustness versus Fault-Tolerance}

Robustness is related to, but different from, $r$-fault tolerance.
An \emph{$r$-fault-tolerant} $t$-spanner, $G=(V,E)$, has the property
that $G\setminus S$ is a $t$-spanner of $V\setminus S$ for any subset
$S\subseteq V$ of size at most $r$.  In our terminology, an $r$-fault
tolerant spanner is $f(k)$-robust with
\[
    f(k) = \begin{cases}k & \text{for $k \le r$}  \\
                   \infty & \text{for $k > r$.}  \\
   \end{cases}
\]

At a minimum, an $r$-fault-tolerant spanner must remain connected
after the removal of any $r$ vertices.  This immediately implies
that any $r$-fault-tolerant spanner with $n>r$ vertices has at least
$(r+1)n/2$ edges, since every vertex must have degree at least $r+1$.
Several constructions of $r$-fault-tolerant spanners with $O(rn)$ edges
exist \cite{cz04,lns02,l99}.

In contrast, surprisingly sparse $f(k)$-robust $t$-spanners exist.
For example, we show that for one-dimensional point sets, there exists
$O(k\log k)$-robust 1-spanners with $O(n\log n)$ edges; the removal of
any set of $o(n/\log n)$ vertices leaves a subgraph of size $n-o(n)$ that
is a $1$-spanner.  An $r$-fault-tolerant spanner with $r=n/\log n$ also
has this property, but all such graphs have $\Omega(n^2/\log n)$ edges.

We suggest that in many applications where an $r$-fault-tolerant spanner
is used, an $f(k)$-robust spanner may be a better choice.  For example,
one might build an $r$-fault-tolerant spanner so that a network survives
up to $r$ faults, perhaps because more than $r$ faults is viewed as
unlikely.  Using an $f(k)$-robust spanner instead means that, if $r'\le
r$ faults do occur, then an additional $f(r')-r'$ nodes suffer, but the
remaining $n-f(r')$ nodes are unaffected.  In one case, the network
loses $r'$ nodes while in the other case $f(r')$ nodes are affected.
For slow-growing functions $f$ this may be perfectly acceptable.

The use of an $f(k)$-robust spanner in place of an $r$-fault-tolerant
spanner has the additional advantage that the maximum number of faults
need not be known in advance.  In the unlikely event that $r'> r$
faults occur, the network continues to remain usable.  In particular,
after $r'>r$ faults, the usable network has size at least $n-f(r')$. In
contrast, even with $r'=r+1$ faults, an $r$-fault-tolerant spanner may
have no component of size larger than $n/2$; see \figref{rft-problem}
for an example.

\begin{figure}
  \begin{center}
    \includegraphics{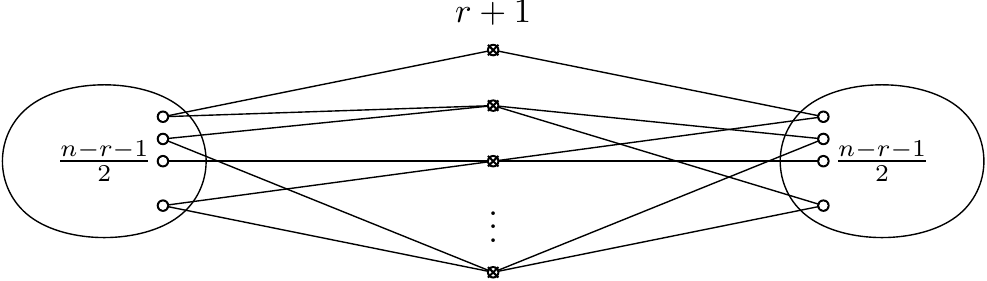}
  \end{center}
  \caption{In an $r$-fault-tolerant spanner, removing $r+1$ vertices may
  disconnect the graph in such a way that no component has size greater
  than $n/2$.}
  \figlabel{rft-problem}
\end{figure}

\subsection{Robustness and Magnification}

A function $h$ is called a \emph{magnification (or vertex-expansion)
function} \cite[Page~390]{kalai91}, for the graph $G=(V,E)$ if, for all
$S\subseteq V$,
\[
    |N(S)| \ge h(|S|) \enspace ,
\]
where $N(S)$ denotes the set of vertices in $V\setminus S$ that
are adjacent to vertices in $S$.  Of particular interest are graphs
that have a magnification function $h(x)=cx$, for fixed $c>1$, and
all $x\in\{1,\ldots,|V|/2\}$.  Such graphs are called \emph{vertex
expanders}, and have a long history and an enormous number of applications
\cite{hlw06}.

If $G$ is $f(k)$-robust, then there exists a magnification function,
$h$, for $G$ that satisfies $h(x) \ge k$, for all $x> f(k)-k$
and every $k\in\{1,\ldots,\lfloor\max\{k':f(k')\le n/2\}\rfloor\}$;
see \figref{magnification}.  This can be proven by contradiction:
If $h(x)$ must be less than $k$ for some $x> f(k)-k$, then there
exists a set $S''$ of size $x>f(k)-k$ such that $|N(S'')|< k$.  Taking
$S=N(S'')\cup\{x_1,\ldots,x_{k-|N(S'')|}\}$, where each $x_i$ is chosen
arbitrarily from $V\setminus N(S'')$ yields a set, $S$, of size $k$, such
that $G\setminus S$, has no component of size greater than $n-x < n-f(k)$.

\begin{figure}
  \begin{center}
    \includegraphics[width=.95\linewidth]{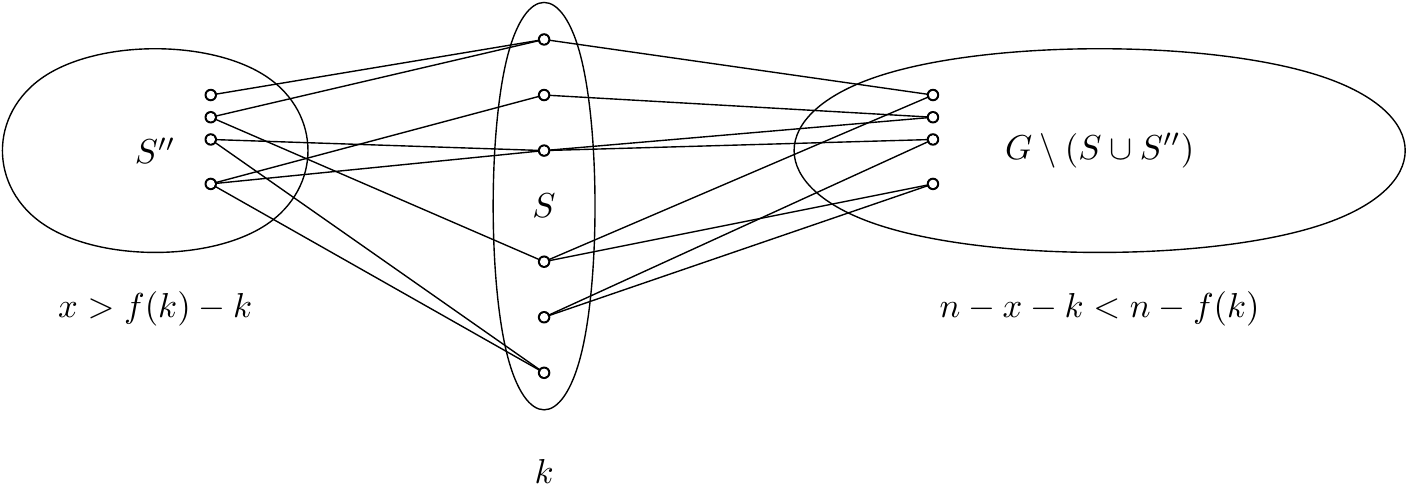}
  \end{center}
  \caption{If $G$ does not have a magnification function, $h$, with
  $h(x)\ge k$, for all $x> f(k)-k$, then $G$ is not $f(k)$-robust.}
  \figlabel{magnification}
\end{figure}

If we think of $f(k)$ as a continuous increasing function (and
hence invertible) then the above argument says that any $f(k)$-robust
spanner with $n$ vertices has a magnification function $h(x)$ such that
$h(x)\in\Omega(\min\{n-x,f^{-1}(x)\})$.  This implies, for example, that the
smallest separator in an $f(k)$-robust spanner with $n$ vertices has
size $\Omega(f^{-1}(n/2))$.

Unfortunately, achieving $f(k)$-robustness is considerably more difficult
than just obtaining a magnification function of the preceding form;
there exist vertex expanders with a linear number of edges \cite{hlw06},
so they have magnification functions of the form $h(x)=cx$, with
fixed $c>1$.  However, these graphs can not be $f(k)$-robust since,
in \thmref{general-lower-bound-1d}, we show that $f(k)$-robust spanners
have a superlinear number of edges, for any function $f(k)$.

\subsection{Overview of Results}

In this paper, we prove upper and lower bounds on the size (number of
edges) needed to achieve $f(k)$-robustness.
These bounds are expressed as a dependence on the function $f(k)$.
In particular, the number of edges depends on the function $f^*(n)$,
which is the maximum number of times one can iterate the function $f$ on
an initial input $k_0$ before exceeding $n$.  As a concrete example, if
$f(k)=2k$, then  $f^*(n)=\floor{\log_2 n}$ (with the initial input $k_0=1$).

Our most general lower-bound, \thmref{general-lower-bound-1d}, states
that, for any constant, $t>1$, there exists one-dimensional point
sets of size $n$ for which any $f(k)$-robust $t$-spanner has size
$\Omega(nf^*(n))$.  For one-dimensional point sets, we can almost match
this lower-bound: \thmref{general-1d} states that any one-dimensional
point set of size $n$ has an $O(f(k)f^*(k))$-robust 1-spanner of size
$O(nf^*(n))$.  Furthermore, if $f(k)$ is sufficiently fast-growing, this
construction is $O(f(k))$-robust, and hence has optimal size.  For point
sets in dimension $d>1$, our upper and lower bounds diverge by a factor
of $k$.  \thmref{dd} shows that, for any set of $n$ points in $\R^d$
and any fixed $t>1$, there exists an $O(kf(k))$-robust $t$-spanner of
size $O(nf^*(n))$.

As a concrete example, we can consider a function $f(k)\in O(k^2)$.
Removing any set $S$ of vertices from a $n$ vertex $O(k^2)$-robust
$t$-spanner leaves a set of at least $n-O(|S|^2)$ vertices which continue
to have $t$-spanning paths between them.  Our results show that, in
one dimension, $O(k^2)$-robust spanners can be constructed that have
$O(n\log\log n)$ edges and this is optimal.  In two and higher dimensions,
$O(k^2)$-robust spanners can be constructed that have $O(n\log n)$ edges.

The remainder of the paper is organized as follows:  \Secref{one-d}
gives results for 1-dimensional point sets, \secref{d-d} gives results
for $d$-dimensional point sets, and \secref{summary} summarizes and
concludes with directions for further research.

\section{One-Dimensional Point Sets}
\seclabel{one-d}

In this section, we consider constructions of robust $t$-spanners for
1-dimensional point sets.  Throughout this section $V=\{x_1,\ldots,x_n\}$
is a set of real numbers with $x_1<x_2<\cdots<x_n$.  We begin by giving
a construction of an $O(k\log k)$-robust 1-spanner having $O(n\log n)$
edges.  This construction contains most of the ideas needed for the
construction of $O(f(k))$-robust 1-spanners for more general $f$.

\subsection{An $O(k\log k)$-robust spanner with $O(n\log n)$ edges}

We now consider the following graph, $G_{2\times}=(V,E)$ which is
closely related to the hypercube.   The edge set, $E$, of $G_{2\times}$
consists of
\[
  E = \{x_ix_{i+2^j} : j\in\{0,\ldots,\floor{\log n}\},\, 
        i\in\{1,\ldots,n-2^j\} \} \enspace .
\] 
Notice that $G_{2\times}$ is a 1-spanner since it contains every
edge of the form $x_ix_{i+1}$, for $i\in\{1,\ldots,n-1\}$. Furthermore,
$G_{2\times}$ has size $O(n\log n)$ since every vertex has degree at most
$2\floor{\log n}+2$.  We now prove an upper-bound on the robustness of
$G_{2\times}$ by using the probabilistic method.

\begin{thm}\thmlabel{better-upper-bound-1d}\thmlabel{klogk-1d}
  Let $V\subset \R$ be any set of $n$ real numbers.  Then there exists
  an $O(k\log k)$-robust $1$-spanner of $V$ of size $O(n\log n)$.
\end{thm}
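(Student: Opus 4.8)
The plan is to show that the graph $G_{2\times}$ defined above is $O(k\log k)$-robust, using the probabilistic method to construct, for any given failure set $S$, a good superset $S^+$. The key intuition is that $G_{2\times}$ restricted to an interval of the sorted order $\{x_a,\ldots,x_b\}$ (with no failures inside) contains a path-like hypercube structure that gives $1$-spanning paths between the endpoints. So the goal is to cover the vertices of $S$ by ``deleted blocks'' of consecutive indices, chosen so that between any two surviving vertices there is a route that hops over these blocks along hypercube edges without accumulating any detour cost — since in one dimension a monotone path has length exactly equal to the distance between its endpoints, a $1$-spanner only requires that surviving vertices be connected by monotone paths in $G\setminus S$.

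**First I would** make precise the ``hopping over a block'' operation. If indices $\{p+1,\ldots,q\}$ are all deleted, then to get from $x_p$ to $x_q$ we need an edge $x_ix_{i+2^j}$ with $i\le p$ and $i+2^j\ge q$, i.e.\ a single power-of-two jump that leaps the whole gap; more generally one wants a short sequence of such jumps on each side. The cost of leaping a gap of size $g$ is that we may need to ``overshoot'', consuming up to $O(g)$ extra surviving vertices on either side as part of $S^+$. The delicate point is that gaps can be nested and adjacent, so the blocks forming $S^+$ must be chosen carefully. Following the standard interval-covering idea (as sketched for the grid), I would round each maximal run of deleted indices up to a dyadic-aligned interval of length a power of two, roughly doubling it; then merge overlapping dyadic intervals. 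Because dyadic intervals of the same scale are either disjoint or nested, this merging terminates with a laminar family, and a charging argument bounds $|S^+|$ in terms of $|S|$ times the number of merge levels.

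**The heart of the argument, where I would invoke randomization,** is bounding how much $S^+$ grows relative to $S$. The deterministic rounding-to-dyadic approach can blow up by a $\Theta(\log n)$ factor in the worst case (a single fault near the middle forces a block of size $n/2$). To avoid this, I would apply a random shift / random dyadic decomposition: pick a random offset (or a random hierarchical partition of $\{1,\ldots,n\}$ into a laminar family of intervals with random breakpoints at each scale), so that for each faulty vertex, the expected size of the smallest partition-block that ``isolates'' it from its neighbors is $O(k)$ per level and only $O(\log k)$ levels are needed before blocks become large enough to be leapt in one hypercube hop. Summing the expected block sizes over the $O(\log k)$ levels gives $\mathbb{E}[|S^+|] = O(k\log k)$, and since this is an expectation over the random construction, there exists a choice of the shift making $|S^+|\le O(k\log k)$ for that particular $S$ — but we must be careful that the \emph{graph} $G_{2\times}$ is fixed in advance and only the \emph{analysis} (the choice of $S^+$) is randomized, which is fine since robustness only asks for existence of $S^+$ per $S$.

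**The main obstacle** I anticipate is the interaction between the routing requirement and the block structure: it is not enough that each individual faulty vertex lie in a small leap-able block; the blocks must simultaneously be (i) few in number along any monotone route, (ii) individually spanned by $O(1)$ hypercube hops from outside, and (iii) collectively small. Reconciling (i)–(iii) is exactly what forces the $\log k$ (rather than $O(1)$) factor: a route from $x_i$ to $x_j$ may encounter up to $\Theta(\log(j-i))$ blocks at geometrically increasing scales, and one must show the hops within and between these blocks compose into a single monotone path, hence a $1$-spanning path. I expect the bulk of the write-up to be this routing lemma — verifying that for any two surviving vertices, repeatedly applying ``largest available hypercube jump that does not overshoot into the next block'' yields a monotone walk staying in $V\setminus S$ — together with the probabilistic estimate $\mathbb{E}[|S^+|]=O(k\log k)$ for the randomly-shifted dyadic cover.
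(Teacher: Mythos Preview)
Your approach is the same as the paper's---a random shift of the dyadic structure---but two points in your sketch need sharpening, one of which is a genuine gap.

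\textbf{The routing is not an obstacle.}  Once you pick the random offset $r\in\{0,\ldots,2^{\lceil\log n\rceil}-1\}$, define for each $x_i\in S$ the integer $j=j(i)$ as the \emph{largest} $j$ with $i-r\equiv 0\pmod{2^j}$, and let $x_i$ kill the interval $\{x_{i-2^j+1},\ldots,x_{i+2^j-1}\}$.  By construction, the two boundary vertices $x_{i-2^j}$ and $x_{i+2^j}$ are joined by a single edge of $G_{2\times}$ (it is one of the edges $x_\ell x_{\ell+2^j}$ with $\ell-r\equiv 0\pmod{2^j}$).  So every killed block is bridged by \emph{one} edge, and there is a single monotone path in $G_{2\times}\setminus S$ through all of $V\setminus S^+$.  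There is no multi-scale routing lemma to prove; your anticipated ``bulk of the write-up'' evaporates.

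\textbf{The genuine gap is in the size bound.}  You assert $\mathbb{E}[|S^+|]=O(k\log k)$, but the naive expectation is $\Theta(k\log n)$: for a fixed $x_i$, $\Pr[j(i)=j]\approx 2^{-(j+1)}$ and the block size is $2^{j+1}-1$, so $\mathbb{E}[c(x_i)]\approx\sum_{j=0}^{\log n}1=\Theta(\log n)$.  Your phrase ``only $O(\log k)$ levels are needed before blocks become large enough to be leapt in one hypercube hop'' is not a valid truncation: \emph{every} block is leapt in one hop, regardless of level, so nothing intrinsic caps the level at $\log k$.  What the paper does instead is split the analysis: call $x_i$ \emph{expensive} if $c(x_i)\ge 4k$, and show (i) $\Pr[\text{some }x_i\in S\text{ is expensive}]\le k\cdot(1/4k)=1/4$ by a union bound, and (ii) $\mathbb{E}\bigl[\sum_{x\in S}\min\{c(x),4k\}\bigr]\le 2k\log k+6k$, so by Markov the cheap total exceeds $4k\log k+12k$ with probability at most $1/2$.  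The union bound then gives a good $r$ with positive probability.  Without this expensive/cheap split your argument only yields an $O(k\log n)$-robust spanner, which is the weaker ``hardy'' notion rather than robustness.
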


\begin{proof}
  Let $S$ be any non-empty subset of $V$ and let $k=|S|$.  Select a
  random integer $r\in\{0,1,2,3,\ldots,2^{\ceil{\log n}}-1\}$ and
  consider the subgraph, $G'$, of $G_{2\times}$ consisting only of
  the edges of the form $x_ix_{i+2^j}$ where $i-r\equiv 0\pmod{2^j}$.
  One can think of the edges of $G'$ as a set $O(\log n)$ monotone paths
  that all contain $x_r$; one of these paths contains every vertex in $V$,
  another contains every second vertex, yet another contains every fourth
  vertex, and so on.  (For readers with a background in data structures,
  $G'$ looks a lot like a perfect skiplist in which $x_r$
  appears at the top level; see \figref{g-prime}.)

  \begin{figure}
    \begin{center}
      \includegraphics[width=.95\linewidth]{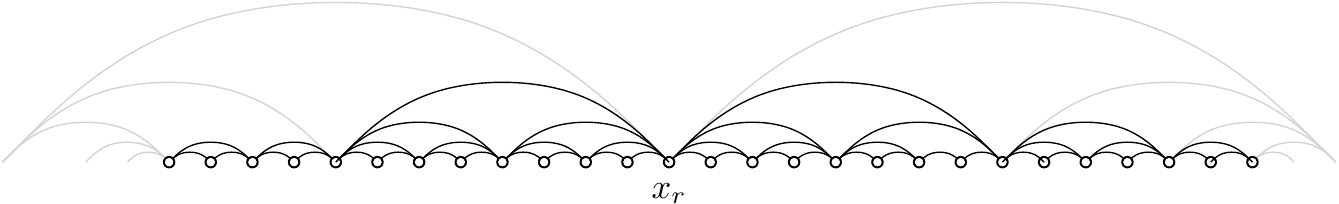}
    \end{center}
    \caption{The graph $G'$}
    \figlabel{g-prime}
  \end{figure}

  For a vertex $x_i\in S$, let $j=j(i)$ be the largest integer such that
  $i-r$ is a multiple of $2^j$.  Then we say that $x_i$ \emph{kills}
  the vertices $x_{i-2^{j}+1},\ldots,x_{i+2^{j}-1}$ in $G'$; see
  \figref{killing}.  When this happens, the \emph{cost} of $x_i$ is
  $c(x_i)=2^{j+1}-1$, which is the number of vertices killed by $x_i$.
  Observe that, unless $i<2^{j}$ or $i>n-2^{j}$, $G'$ contains the
  edge $x_{i-2^{j}}x_{i+2^{j}}$ that ``jumps over'' all the vertices
  killed by $x_i$.  Therefore, if we define $S^+$ to be the set of all
  vertices killed by vertices in $S$, then $G'\setminus S$ (and hence
  also $G_{2\times}\setminus S$) is a 1-spanner of $V\setminus S^+$; it
  contains a path that visits all vertices of $V\setminus S^+$ in order.
  
  \begin{figure}
    \begin{center}
      \includegraphics[width=.95\linewidth]{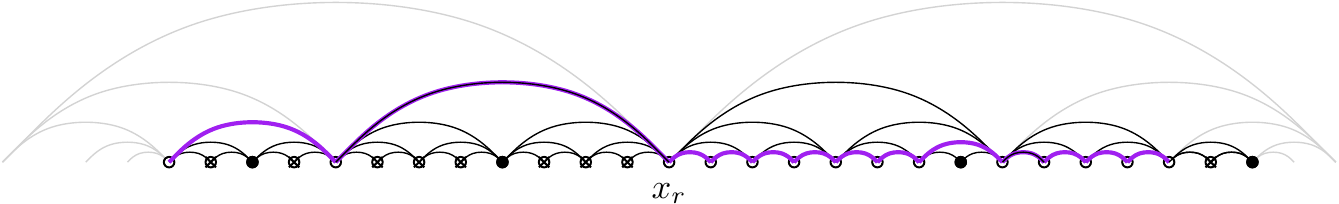}
    \end{center}
    \caption{Constructing the set $S^+$ (whose elements are denoted
    by $\times$ and \textbullet)
      from the set $S$ (whose elements are denoted by \textbullet).}
    \figlabel{killing}
  \end{figure}
  
  We say that a vertex $x\in S$ is \emph{cheap} if $c(x) < 4k$ and
  \emph{expensive} otherwise.  We call our choice of $r$ a \emph{failure}
  if
  \begin{enumerate}
    \item $\mathcal{A}$: some vertex of $S$ is expensive; or
    \item $\mathcal{B}$: the total cost of all cheap vertices exceeds
      $4k\log k+12k$
  \end{enumerate}
  We declare our choice of $r$ a \emph{success} if neither
  $\mathcal{A}$ nor $\mathcal{B}$ holds.  Observe that, in the case
  of a success, we obtain a set $S^+$, $|S^+|\in O(k\log k)$, such that
  $G_{2\times}\setminus S$ is a 1-spanner of $V\setminus S^+$.  Therefore,
  all that remains is to show that the probability of success is greater
  than 0.
  
  We first note that the probability any particular $x_i\in S$ is
  expensive is at most $1/4k$.  This is because $x_i$ is expensive if and
  only if $(i-r)\equiv 0 \pmod {2^{\ceil{\log(4k)}}}$.  The probability
  of selecting $r$ with this property is only $1/2^{\ceil{\log(4k)}}
  \le 1/4k$. Therefore, by the union bound,
  \[
     \Pr\{\mathcal{A}\} \le k/4k = 1/4 \enspace .
  \]

  To upper-bound the total expected cost of cheap vertices, we note that,
  if $x_i\in S$ kills $2^{j+1}-1$ vertices, then $i-r\equiv 0\pmod{2^j}$.
  The probability that this happens is $1/2^{j}$. 
  Letting $S^{c}$
  denote the set of cheap vertices in $S$, the total expected cost of
  all cheap vertices is at most
  \begin{eqnarray*}
     \E\left[\sum_{x\in S^{c}}c(x) \right] 
     &\le& \E\left[\sum_{x\in S}\min\{2^{\floor{\log 4k}},c(x)\} \right]  \\
    &\le&  k\sum_{j=0}^{\floor{\log (4k)}} (2^{j+1}-1)/2^j \\
     &\le& k\sum_{j=0}^{\floor{\log (4k)}} 2 \\
     &\le & 2k\log k + 6k \enspace .
  \end{eqnarray*}
  Therefore, by Markov's Inequality, $\Pr\{\mathcal{B}\}\le 1/2$.
  By the union bound
  \[
     \Pr\{\mbox{$\mathcal{A}$ or $\mathcal{B}$}\} \le 1/4+ 1/2 < 1
  \enspace . \qedhere
  \]
\end{proof}

\subsection{A General Construction}
\seclabel{iterated}

Let $k_0\ge 1$ be a constant and let $f:\R\rightarrow\R$ be any function
that is convex, increasing over the interval $[k_0,\infty)$, and such that
$f(k_0+1)-f(k_0) > 1$.  Let $f^{i}(k)$ be the function $f$ iterated $i$
times on the initial value $k$, i.e.,
\[
   f^{i}(k) = \underbrace{f(f(f(\cdots f}_{i}(k)\cdots))) \enspace .
\]
We use the convention that $f^0(x) = k_0$ for all $x$.  We define the
\emph{iterated $f$-inverse function}
\[
   f^*(n) = \max\{i : f^{i}(k_0) \le n\} \enspace .
\] 
Notice that, for any $k> k_0$, there exists $i$ such that
\[
   f^i(k_0) < k \le f(f^i(k_0)) \enspace .
\]
In particular, the sequence $k_0,f(k_0),f^2(k_0),\ldots,$ contains
a value $f^{i+1}(k_0)$ such that
\[
      k  \le  f^{i+1}(k_0) < f(k) \enspace .
\]
Another important property is that, since $f(k)$ is increasing, convex,
and $f(k_0+1)-f(k_0)>1$, the function $f(x)/x$ is non-decreasing
for $x\ge k_0$: For every $\delta\ge 0$, and every $x\ge k_0$,
$f(x+\delta)/(x+\delta) \ge f(x)/x$.

For a positive number $x$, we define $\upen{x}=2^{\ceil{\log x}}$, as
the smallest power of 2 greater than or equal to $x$.  From the function,
$f$, we define the graph $G_f=(V,E_f)$ to have the edge set:
\begin{eqnarray*}
    E_f &=& \quad\left\{ x_ix_{i+1} : i\in\{1,\ldots,n-1\} \right\} \\
     && {} \cup \left\{ x_{i}x_{i+\upen{f^j(k_0)}} : j\in\{0,\ldots,f^*(n)\},\,
        i\in\{1,\ldots,n-\upen{f^j(k_0)}\} \right\}
\end{eqnarray*}
The graph $G_f$ clearly has $O(nf^*(n))$ edges.  The following theorem
shows that this graph is a robust spanner:

\begin{thm}\thmlabel{general-1d}
  Let $f$, $f^*$, $k_0$, and $G_f$ be defined as above.  Then the graph
  $G_f$ has $O(nf^*(n))$ edges and is
  \begin{enumerate}
    \item an $O(f(4k)f^*(k))$-robust 1-spanner; and 
    \item an $O(f(4k))$-robust 1-spanner if $f(k)\in k2^{\Omega(\sqrt{\log k})}$.
  \end{enumerate}
\end{thm}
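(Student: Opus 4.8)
The plan is to generalize the probabilistic argument behind \thmref{klogk-1d}. The edge bound is immediate: $G_f$ uses only the $f^*(n)+2$ distinct edge lengths $1$ and $\upen{f^j(k_0)}$, $j\in\{0,\dots,f^*(n)\}$, so every vertex has degree $O(f^*(n))$. For the robustness bound, fix $S\subseteq V$ and let $k=|S|$; exactly as in \thmref{klogk-1d}, it suffices to produce $S^+\supseteq S$ of the claimed size such that $G_f\setminus S$ contains a single path visiting the vertices of $V\setminus S^+$ in the order $x_1<x_2<\cdots$, since such a monotone path is automatically a $1$-spanner of $V\setminus S^+$. (When $k$ is below a fixed constant, or when $n\le 2f(4k)$, one disposes of $S$ directly, in the latter case by $S^+=V$; so assume $k$ large and $n>2f(4k)$.)

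First choose the scale at which to work. Using the property of $f^*$ recorded just before the definition of $G_f$, with $4k$ in place of $k$, let $j^*$ be the index with $f^{j^*-1}(k_0)<4k\le f^{j^*}(k_0)$, so the jump length $L:=\upen{f^{j^*}(k_0)}$ satisfies $4k\le L<2f(4k)$; since $n>2f(4k)>L$, edges of length $L$ occur in $G_f$. Pick a uniformly random offset $r\in\{0,\dots,L-1\}$ and set $H=\{x_i : i\equiv r\pmod L\}$. Because $G_f$ contains \emph{every} edge of length $L$, $H$ is the vertex set of a monotone ``highway'' path, and a fixed vertex lies in $H$ with probability $1/L\le 1/(4k)$; so by the union bound, with probability at least $3/4$ the highway avoids $S$. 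On that event the highway survives the deletion of $S$ and cuts $V$ into $O(n/L)$ consecutive ``blocks'', each an interval of $L-1$ vertices flanked by two surviving highway vertices.

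It then remains to route, within each block, monotonically from its left highway endpoint to its right one through the block's surviving vertices, consigning the unreachable ones to $S^+$. A block carries precisely the shorter lengths $1,\upen{k_0},\dots,\upen{f^{j^*-1}(k_0)}$, and one processes it bottom-up: group its deleted vertices into ``clusters'' (intervals), and at level $s=0,1,\dots,j^*-1$ merge clusters that lie too close to be separated by a surviving landing site for a length-$\upen{f^s(k_0)}$ edge; once a cluster becomes sufficiently isolated it is \emph{bypassed} by the shortest edge of $G_f$ straddling it, and the (at most $L-1=O(f(4k))$) vertices that edge skips are placed in $S^+$. Charging skipped vertices to the deleted vertices of their cluster, using convexity of $f$ to bound $\sum_C f(\mathrm{width}(C))$ by $f$ of the total, and summing over the $j^*=O(f^*(k))$ levels yields $\E[|S^+|]=O(f(4k)f^*(k))$ on the good event; Markov's inequality bounds the probability of overshoot by $1/2$, and together with the highway's $1/4$ failure probability a successful $r$ exists. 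This proves part~1.

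The crux — and the reason the ``jump one level up'' shortcut of \thmref{klogk-1d} does not transfer verbatim — is that consecutive jump lengths $\upen{f^s(k_0)}$ and $\upen{f^{s+1}(k_0)}$ can differ by an unbounded factor, so a deleted vertex that sits \emph{on} a coarse path cannot be detoured around without forfeiting the whole gap up to the next length; this is exactly why the level-$j^*$ highway must be randomly aligned to miss $S$, and why the per-level bypass costs must be bounded one level at a time rather than multiplied through. For part~2, when $f(k)\in k2^{\Omega(\sqrt{\log k})}$ the lengths $\upen{f^s(k_0)}$ grow quickly enough that a sharper accounting of the bypass costs — in which the successive levels' contributions form a convergent series dominated by its top term $O(f(4k))$ — absorbs the extra $f^*(k)$ factor; making this refined summation precise is the main technical obstacle of the theorem.
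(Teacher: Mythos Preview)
Your diagnosis in the ``crux'' paragraph is mistaken, and this leads you down an unnecessarily complicated and ultimately incomplete route. The approach of \thmref{klogk-1d} \emph{does} transfer verbatim, and that is exactly what the paper does. The key observation you missed is that every edge length $\upen{f^j(k_0)}$ is a power of $2$, so the congruence conditions $i-r\equiv 0\pmod{\upen{f^j(k_0)}}$ \emph{nest}: a single uniform offset $r$ (chosen from the full range $\{0,\dots,\upen{f^{f^*(n)+1}(k_0)}-1\}$) simultaneously determines, for each $x_i\in S$, the smallest level $j(i)$ at which $x_i$ falls off the shifted grid, and $x_i$ is then bypassed by a single edge of span $\upen{f^{j(i)}(k_0)}$ in the random subgraph $G'$. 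The probability that $x_i$ is ``expensive'' (cost exceeding $f(4k)$) is at most $1/(4k)$, exactly as before. For the cheap vertices, the expected total cost is at most $k\sum_{j=0}^{f^*(4k)}\upen{f^{j+1}(k_0)}/\upen{f^j(k_0)}$, and each summand is at most $2f(4k)/(4k)$ because $f(x)/x$ is non-decreasing; this gives the $O(f(4k)f^*(k))$ bound directly. For part~2, the hypothesis $f(k)\in k2^{\Omega(\sqrt{\log k})}$ makes the ratios $f^{j+1}(k_0)/f^j(k_0)$ grow geometrically, so the same sum is dominated by its last term and the $f^*(k)$ factor disappears.

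By contrast, your single-level randomisation plus deterministic in-block clustering is not a complete argument. The ``bottom-up merging'' of clusters is never specified precisely enough to see why it terminates with the claimed cost; the appeal to convexity (``bound $\sum_C f(\mathrm{width}(C))$ by $f$ of the total'') points in the wrong direction for an upper bound and in any case does not obviously apply to clusters of width below $k_0$; the expectation symbol in $\E[|S^+|]$ is misplaced, since once you condition on the highway missing $S$ the block structure and hence the clustering are deterministic; and you explicitly leave part~2 as an unresolved ``main technical obstacle.'' The paper's single-offset argument avoids all of these difficulties.
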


\begin{proof}
  The proof is very similar to the proof of \thmref{klogk-1d}.  Let $S$ be
  any non-empty subset of $V$ and let $k=|S|$.   Select a random integer
  $r$ from the set $\{0,1,2,3,\ldots,\upen{f^{f^*(n)+1}(k_0)}-1\}$.  We consider the
  subgraph $G'$ of $G_f$ that contains only the edges $x_ix_{i+\ell}$
  where $i-r\equiv 0\pmod{\ell}$.  We say that an edge $x_ix_{i+\ell}$
  has \emph{span} $\ell$.

  For an integer $i$, let $j=j(i)$ be the smallest integer such that
  $i-r\not\equiv 0\pmod{\upen{f^j(k_0)}}$; see \figref{spanjump}.
  Informally, if $G'$ has any edge that jumps over $x_i$, then it has an
  edge of span $\upen{f^j(k_0)}$ that jumps over $x_i$.  Then we say that
   $x_i$ \emph{kills} $x_{i-p+1},\ldots,x_{i+q-1}$ where
  \begin{eqnarray*}
     p&=& \left((i-r) \bmod \upen{f^j(k_0)}\right) \text{ and } \\
     q&=& \left((i-r) \bmod \upen{f^j(k_0)}\right) \enspace .
  \end{eqnarray*}
  \begin{figure}
    \begin{center}
      \includegraphics{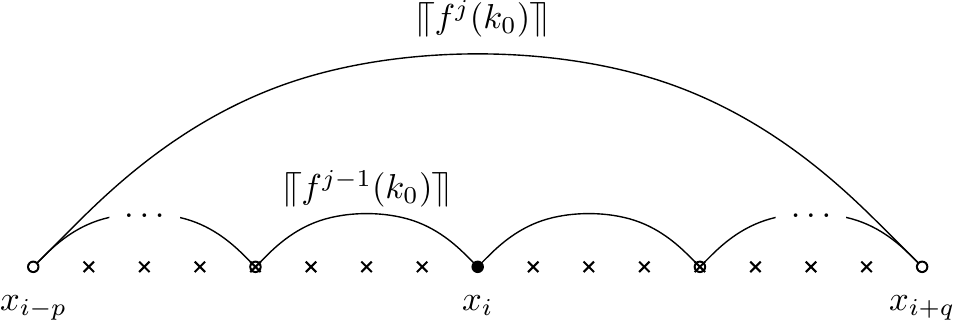}
    \end{center}
    \caption{The vertices killed by $x_i$.}
    \figlabel{spanjump}
  \end{figure}
  As before, we define $S^+$ to be the set of all vertices killed by
  vertices in $S$.  It is easy to verify, since all edges have
  spans that are powers of 2, that the graph $G'\setminus S^+$ (and hence
  also $G_f\setminus S$) contains a path that visits all the vertices of
  $V\setminus S^+$ in order.  Therefore, $G_f\setminus S$ is a 1-spanner
  of $V\setminus S^+$.

  What remains is to show that, with some positive probability, $S^+$
  is sufficiently small to satisfy the appropriate condition, 1 or 2,
  of the theorem.  Define $c(x_i)$ as the number of vertices killed by
  $x_i$.  We say that $x_i$ is \emph{expensive} if $c(x_i) > f(4k)$ and
  \emph{cheap} otherwise.  If $x_i$ is expensive, then $f^{j(i)-1}(k_0)\ge
  4k$ and $i-r\equiv 0 \pmod{\upen{f^{j(i)-1}(k_0)}}$.  Therefore,
  the probability that $x_i$ is expensive is at most $1/f^{j(i)-1}(k_0)
  \le 1/4k$.  Therefore, by the union bound, the probability that $S$
  contains some expensive vertex is at most $1/4$.  All that remains
  is to bound the expected cost of all cheap vertices. Letting $S^c$
  denote the set of cheap vertices in $S$, we obtain
  \begin{align*}
     \E\left[\sum_{x\in S^c} c(x)\right] 
      & \le  k \sum_{j=0}^{f^*(4k)} \upen{f^{j+1}(k_0)}/\upen{f^j(k_0)} \\
      & \le  2k \sum_{j=0}^{f^*(4k)} f^{j+1}(k_0)/f^j(k_0) \\
      & =  2k \sum_{j=0}^{f^*(4k)} f(f^{j}(k_0))/f^j(k_0) \\
      & \le  2k \sum_{j=0}^{f^*(4k)} f(4k)/4k 
           & \text{(since $f(x)/x$ is non-decreasing)} \\
      & \le  (1/2)(f(4k)(f^*(4k)+1) \enspace .
  \end{align*}
  Again, Markov's Inequality implies that the probability that the total
  cost of all cheap vertices exceeds $f(4k)(f^*(4k)+1)$ is at most $1/2$.
  Therefore, the probability of finding a set $S^+$ of size at most
  $f(4k)(f^*(4k)+1)$ is at least
  \[  
     1 - 1/2 - 1/4 > 0 
  \]
  which proves the existence of such a set $S^+$.
  
  To prove the second part of the theorem, we proceed exactly the same
  way, except that the sequence $f^{j+1}(k_0)/f^j(k_0)$,
  $j=0,1,2,\ldots$, becomes
  geometric,\footnote{This is most easily seen by taking $f(k)
  = k\delta^{2\sqrt{(\log k)/(\log\delta)}+1}$.  Then it is
  straightforward to verify that $f^j(\delta) = \delta^{(j+1)^2}$, so
  that $f^{j+1}(\delta)/f^j(\delta)= \delta^{2j+3}$, so the sequence
  is exponentially increasing.  Taking $\delta = 1+\epsilon$ for a
  sufficiently small $\epsilon>0$ allows us to lower-bound any function
  $f(k)\in k2^{\Omega(\sqrt{\log k})}$ this way.} so it is dominated by
  its last term.  This yields:
  \begin{align*}
  \E\left[\sum_{x\in S^c} c(x)\right] 
      & \le  k \sum_{j=0}^{f^*(4k)} \upen{f^{j+1}(k_0)}/\upen{f^j(k_0)} \\
      & \le  2k \sum_{j=0}^{f^*(4k)} f^{j+1}(k_0)/f^j(k_0) \\
      & \le  2ck\left(\frac{f(f^{f^*(4k)}(k_0))}{f^{f^*(4k)}(k_0)}\right) 
            & \text{(for some $c$, since the sum is geometric)} \\
      & \le  2ck\left(\frac{f(4k)}{4k}\right) 
            & \text{(since $f(x)/x$ is non-decreasing)} \\
      & \le  (c/2)f(4k) \enspace , & 
  \end{align*}
  as required.
\end{proof}

Applying \thmref{general-1d} with different functions $f(k)$ yields the
following results.
\begin{cor}
  For any set $V$ of $n$ real numbers, and any constant $\eps >0$,
  there exist $f(k)$-robust 1-spanners $G=(V,E)$ with
  \begin{enumerate}
    \item $f(k)\in O(k\log k)$ and $O(n\log n)$ edges;
    \item $f(k)\in O(k(1+\eps)^{\sqrt{\log k}})$ and $O(n\sqrt{\log n})$
      edges; and
    \item $f(k)\in O(k^{1+\eps})$ and $O(n\log\log n)$ edges.
  \end{enumerate}
\end{cor}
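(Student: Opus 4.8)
The plan is to apply \thmref{general-1d} three times, once for each of the three parts, each time choosing a suitable function $f$ and starting constant $k_0$. For each choice I will need to (i) verify that $f$ is increasing and convex on $[k_0,\infty)$ with $f(k_0+1)-f(k_0)>1$, so that the hypotheses of \thmref{general-1d} hold; (ii) evaluate the iterated inverse $f^*(n)$, which controls the edge count $O(nf^*(n))$; and (iii) decide which of the two conclusions of \thmref{general-1d} applies --- conclusion~2 whenever $f(k)\in k2^{\Omega(\sqrt{\log k})}$, and conclusion~1 otherwise --- and read off the robustness.

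For part~1, I would take $k_0=1$ and $f(k)=2k$. This $f$ is linear, hence convex and increasing, and $f(2)-f(1)=2>1$; since $f^{j}(1)=2^{j}$ we get $f^*(n)=\floor{\log_2 n}$ and $G_f$ has $O(n\log n)$ edges. Here $f(k)/k$ is the constant $2$, which is not $2^{\Omega(\sqrt{\log k})}$, so conclusion~1 applies and yields $O(f(4k)f^*(k))=O(k\log k)$-robustness. For part~3, I would take $k_0=2$ and $f(k)=k^{1+\eps}$; this is convex and increasing on $[2,\infty)$ with $f(3)-f(2)=3^{1+\eps}-2^{1+\eps}>1$ for every $\eps>0$, and iterating gives $f^{j}(2)=2^{(1+\eps)^{j}}$, so $f^*(n)=\floor{\log_{1+\eps}\log_2 n}\in O(\log\log n)$ and $G_f$ has $O(n\log\log n)$ edges. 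Since $f(k)/k=2^{\eps\log_2 k}$ with $\eps\log_2 k\in\Omega(\sqrt{\log k})$, conclusion~2 applies and gives $O(f(4k))=O\big((4k)^{1+\eps}\big)=O(k^{1+\eps})$-robustness.

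For part~2, I would take $k_0=2$ and $f(k)=k(1+\eps)^{\sqrt{\log_2 k}}$ --- essentially a rescaled version of the function appearing in the footnote to \thmref{general-1d}. After checking that this $f$ is increasing and convex on $[2,\infty)$ and that $f(3)-f(2)>1$ for every $\eps>0$, I would analyze the iteration by setting $u_i=\sqrt{\log_2 f^{i}(2)}$, so that the recurrence becomes $u_i^{2}=u_{i-1}^{2}+\gamma u_{i-1}$ with $\gamma=\log_2(1+\eps)$; this forces $u_i=\Theta(i)$, hence $f^{i}(2)=2^{\Theta(i^{2})}$, hence $f^*(n)=\Theta(\sqrt{\log n})$ and $G_f$ has $O(n\sqrt{\log n})$ edges. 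Since $f(k)/k=(1+\eps)^{\sqrt{\log_2 k}}\in 2^{\Omega(\sqrt{\log k})}$, conclusion~2 applies and gives $O(f(4k))=O\big(k(1+\eps)^{\sqrt{\log k}}\big)$-robustness, using that the factor $(1+\eps)^{\sqrt{\log(4k)}-\sqrt{\log k}}$ is $O(1)$.

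The only step I expect to require genuine care is part~2: one must confirm that this particular $f$ is convex on the whole ray $[k_0,\infty)$ rather than merely for large $k$ (which is why I fix $k_0=2$ --- or, if necessary, a slightly larger constant --- rather than taking $k_0$ close to $1$), and one must run the recurrence above carefully enough to conclude $f^*(n)=\Theta(\sqrt{\log n})$ with only an $\eps$-dependent constant. Parts~1 and~3 are immediate once the functions are written down, since there the values of $f^*(n)$ are standard.
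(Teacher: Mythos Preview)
Your proposal is correct and follows exactly the approach the paper intends: the corollary is stated immediately after \thmref{general-1d} as a direct application, and your three choices of $f$ (namely $2k$, $k(1+\eps)^{\sqrt{\log_2 k}}$, and $k^{1+\eps}$) are the natural ones. Your verification that conclusion~2 of \thmref{general-1d} applies in parts~2 and~3, and that conclusion~1 suffices for part~1, is precisely what is needed; the convexity check for part~2 does go through at $k_0=2$ (the second derivative of $k\,e^{\beta\sqrt{\ln k}}$ has the sign of $2\ln k+\beta\sqrt{\ln k}-1$, which is positive already at $k=2$), so no larger $k_0$ is required.
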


\subsection{Lower Bounds}

In this section, we give lower-bounds on the number of edges in
$f(k)$-robust $t$-spanners.  These lower-bounds hold already for a
specific 1-dimensional point set (the $1\times n$ grid), therefore they
apply to all dimensions $d\ge 1$.

\subsubsection{A Lower Bound for Linear Robustness}

We begin by focusing on the hardest case, $f(k) \in O(k)$.

\begin{thm}\thmlabel{simple-lower-bound-1d}
  Let $V=\{1,\ldots,n\}$ and let $t\ge 1$ be a constant.  Then any
  $O(k)$-robust $t$-spanner of $V$ has $\Omega(n\log n)$ edges.
\end{thm}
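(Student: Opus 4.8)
The plan is to prove the lower bound by a counting/charging argument based on the structure of the $1\times n$ grid $V=\{1,\dots,n\}$. Suppose $G=(V,E)$ is an $f(k)$-robust $t$-spanner with $f(k)\le ck$ for a constant $c$, and fix the constant $t\ge 1$. The key geometric fact in one dimension is that a $t$-spanner path between two points $x<y$ must stay within the interval $[x - (t-1)(y-x), y + (t-1)(y-x)]$, since any edge of $G$ that leaves this interval is already longer than $t\|xy\|$; more importantly, \emph{every point of $\{x,x+1,\dots,y\}$ that lies strictly between $x$ and $y$ must be ``covered'' by an edge of the path that jumps over it}, and the total length of such jumps is at most $t(y-x)$. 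This bounded-stretch property is what forces many ``long'' edges to exist.

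First I would set up a dyadic family of intervals. For each scale $s=1,\dots,\lfloor\log n\rfloor$, partition $\{1,\dots,n\}$ into consecutive blocks of length roughly $2^s$; there are $\Theta(n/2^s)$ such blocks at scale $s$. The idea is that, for each scale $s$ and each block $B$ at that scale, $G$ must contain at least a constant fraction of a ``full'' set of edges crossing the two halves of $B$ — more precisely, edges with one endpoint in the left half of $B$ and the other in the right half. If this were not so, I would construct a small removal set $S$ (of size $O(2^s)$, one or two well-chosen vertices near the middle of $B$ plus a handful of neighbours, exploiting that a sparse cut across the middle of $B$ can be severed by deleting $O(1)$ vertices when the crossing edges are few) that disconnects, or badly stretches, $\Omega(2^s)$ vertices on one side of $B$ from $\Omega(2^s)$ vertices on the other. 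Because $f(k)\le ck$, removing $|S|=O(1)$ or $O(2^s)$ vertices may only harm $O(2^s)$ further vertices, and by choosing the constants so that the harmed set $S^+$ must miss some pair straddling the cut at the full distance $\Theta(2^s)$, we contradict the $t$-spanner guarantee for $V\setminus S^+$. Summing the forced $\Omega(1)$ edges per block over the $\Theta(n/2^s)$ blocks at scale $s$ gives $\Omega(n/2^s)\cdot\Omega(2^s)=\Omega(n)$? — that is too weak; instead each block must contribute $\Omega(2^s)$ crossing edges (a constant fraction of the $\Theta(2^{2s})$ possible, or at least $\Theta(2^s)$ of them), giving $\Omega(n)$ edges \emph{per scale}, and since the edge sets at different scales are essentially disjoint (an edge of length $\ell$ can cross only $O(1)$ blocks of a given scale, and is counted at $O(1)$ scales), summing over $\Theta(\log n)$ scales yields $\Omega(n\log n)$ total.

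The main obstacle I anticipate is the middle step: turning ``few crossing edges across block $B$'' into ``a small vertex set $S$ whose robust closure $S^+$ still omits a far-apart pair.'' The subtlety is that $f(k)$-robustness lets the adversary (the spanner) choose $S^+$ to be \emph{any} superset of $S$ of size $\le f(|S|)$, so I must ensure that \emph{no} choice of $S^+$ of size $\le c|S|$ can simultaneously cover all surviving far pairs; this requires that after deleting $S$, the two sides of $B$ each retain $\gg c|S|$ vertices at the relevant scale, which is why I take $|S|=\Theta(2^s/c)$ rather than $O(1)$, and why I need a cut of size $o(2^s)$ to be severable by that many vertices — hence the crossing-edge threshold should be $\varepsilon 2^s$ for a small constant $\varepsilon=\varepsilon(c,t)$ rather than a large constant. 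I would also need to handle the ``stretch'' version (not just disconnection): even if $G\setminus S$ stays connected, a thin cut forces any surviving path across $B$ to detour far outside $[x,y]$ or to route through the few surviving crossing edges, and a careful accounting shows the stretch exceeds $t$ for some surviving pair unless those crossing edges are short and numerous — again forcing $\Omega(2^s)$ of them. Making these constants consistent across all scales, and verifying the near-disjointness of the per-scale edge sets so the final sum is genuinely $\Omega(n\log n)$ and not an over-count, is where the real care lies.
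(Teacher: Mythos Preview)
Your overall architecture matches the paper's: a multi-scale argument showing $\Omega(n)$ forced edges per scale, summed over $\Theta(\log n)$ geometrically spaced scales. But the step you yourself flag as ``the main obstacle'' --- the actual construction of $S$ --- is where the proof lives, and your sketch does not resolve it. In particular, your disjointness claim (``an edge \ldots\ is counted at $O(1)$ scales'') requires the forced crossing edges at scale $s$ to have length $\Theta(2^s)$, and nothing in your construction gives a \emph{lower} bound on their length: an edge of length $1$ straddling a block midpoint is a crossing edge at every scale, so without length control the per-scale counts overlap and the sum collapses.

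The paper's construction handles both obstacles at once. At scale $k$ and center $i$, take
\[
   S \;=\; \{i-k/4,\ldots,i+k/4\} \;\cup\; \{\text{one endpoint of each ``good'' edge}\},
\]
where an edge $xy$ is \emph{good} if $x<i-k/4<i+k/4<y$ and $y-x\le 2ctk$. If there are fewer than $k/2$ good edges then $|S|\le k$; in $G\setminus S$ every edge crossing $i$ has length exceeding $2ctk$; and since $|S^+|\le ck$, some pair $u\in\{i-ck,\ldots,i-1\}$, $w\in\{i+1,\ldots,i+ck\}$ survives in $V\setminus S^+$, giving stretch exceeding $t$. The middle-segment deletion is doing double duty: it forces any surviving crossing edge to be long (yielding the stretch contradiction), and it forces every good edge to have length in $(k/2,\,2ctk]$, so geometrically spaced values of $k$ give genuinely disjoint length classes and the $\Omega(n\log n)$ sum is honest. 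Your outline has the right shape; what is missing is precisely this construction.
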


\begin{proof}
  To simplify the following discussion, we will assume that $G=(V,E)$ is
  a $ck$-robust $t$-spanner.  Note that we have gone from $O(k)$-robust
  in the statement of the theorem to $ck$-robust in the proof.  This does
  not cause a problem so long as we only consider values of $k$ greater
  than some constant $k_0$ hidden in the $O$ notation.

  We claim that for every natural number $k$ divisible by 4 and every
  $i\in\{ck+1,\ldots,n-ck-1\}$, $G$ has at least $k/2$ \emph{good} edges,
  $xy$, such that $x < i-k/4 < i+k/4 < y$ and such that $y-x \le 2ctk$.

  \begin{figure}
    \begin{center}\includegraphics{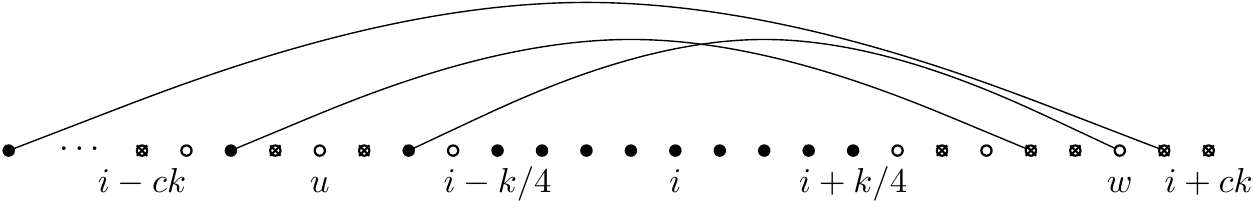}\end{center}
    \caption{After removing $S$ (denoted by \textbullet), there are
      still two vertices $u,w\in V\setminus S^+$ such that $\|uw\|\le 2ck$
      but $\|uw\|_{G\setminus S} > 2ctk$.}
    \figlabel{lower-bound}
  \end{figure}
  To see why the preceding claim is true, consider the set $S$ that
  contains $\{i-k/4,\ldots,i+k/4\}$ as well as the left endpoint of each
  good edge (see \figref{lower-bound}).  The set $S$ has size at most
  $k$ and, in $G\setminus S$, the only edges $xy$ with $x<i$ and $y>i$
  have length greater than $2ctk$.  Now consider any $S^+\supseteq S$,
  with $|S^+|\le ck$.  Since $|S^+|\le ck$ there is at least one element
  $u\in\{i-ck,\ldots,i-1\}$ that is not in $S^+$ and at least one element
  $w\in\{i+1,\ldots,i+ck\}$ that is not in $S^+$.  Now,
  \[    w-u \le 2ck \]
  and, in $G\setminus S$, every path from $u$ to $w$ uses an edge of length
  greater than $2ctk$.  Therefore,
  \[
     \frac{\|uw\|_{G\setminus S}}{\|uw\|} > \frac{2ctk}{2ck} = t \enspace .
  \]
  This contradicts the assumption that $G$ is $ck$-robust $t$-spanner, so we
  conclude that there are, indeed at least $k/2$ good edges.
  
  Applying the above argument to $i=ck+j2ctk$, for
  $j\in\{0,\ldots,\floor{(n-ck)/2ctk}\}$ implies that $G$ contains
  $\Omega(n/tc)=\Omega(n)$ edges whose length is in the range
  $[k/2+1,2ctk]$.  Applying this argument for $k\in\{\ceil{(4tck_0)^{j}}
  : j\in\{0,\ldots,\floor{(\log n)/\log(2tck_0)}\}$ proves that, for
  any constants $c,k_0,t>1$, $G$ has $\Omega(n\log n)$ edges.
\end{proof}

\subsubsection{A General Lower Bound}\seclabel{generalized}

Using the iterated functions from \secref{iterated}, we obtain a whole
class of lower-bounds.

\begin{thm}\thmlabel{general-lower-bound-1d}
  Let $k_0$, $f$, and $f^*$ be defined as in \secref{iterated},
  let $V=\{1,\ldots,n\}$, and let $t\ge 1$ be a constant.  Then any
  $f(k)$-robust $t$-spanner of $V$ has $\Omega(nf^*(n))$ edges.
\end{thm}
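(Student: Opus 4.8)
The plan is to generalize the argument of \thmref{simple-lower-bound-1d} by replacing the single scale parameter $k$ with the geometric-like sequence of scales $k_0, f(k_0), f^2(k_0),\dots, f^{f^*(n)}(k_0)$ coming from \secref{iterated}. As in the linear case, I would assume $G=(V,E)$ is a $cf(k)$-robust $t$-spanner (absorbing the $O(\cdot)$ constant into $c$, which only costs us a constant factor and lets us restrict to $k\ge k_0$), and then show that at \emph{each} scale $f^i(k_0)$ the graph must contain $\Omega(n)$ edges whose lengths lie in a range $[\,\Theta(f^i(k_0)),\,\Theta(t\,c\,f(f^i(k_0)))\,]$ associated with that scale. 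Since $f(x)/x$ is non-decreasing and $f(k_0+1)-f(k_0)>1$, consecutive scales $f^i(k_0)$ and $f^{i+1}(k_0)$ grow by more than an additive constant; one has to check that the length-ranges for distinct scales are pairwise disjoint (or overlap only $O(1)$ times), so that summing over the $f^*(n)+1$ scales yields $\Omega(n f^*(n))$ edges in total.

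The core estimate at a fixed scale mirrors the ``good edges'' argument verbatim. Fix $i$ and write $\kappa = f^i(k_0)$, $K = f(\kappa) = f^{i+1}(k_0) \ge \kappa$. For each position $p$ in a suitable arithmetic progression along $\{1,\dots,n\}$ with common difference $\Theta(tcK)$, I claim $G$ has at least $\kappa/2$ edges $xy$ with $x < p-\kappa/4 < p+\kappa/4 < y$ and $y-x \le 2ctK$. Suppose not: then taking $S$ to be $\{p-\kappa/4,\dots,p+\kappa/4\}$ together with the left endpoints of all such ``good'' edges gives $|S|\le \kappa \le K$, and in $G\setminus S$ every edge crossing position $p$ has length $>2ctK$. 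For any $S^+\supseteq S$ with $|S^+|\le c K$ — which is the budget guaranteed by $cf(\kappa)$-robustness applied with the set $S$ of size at most $\kappa$, since $f(\kappa)=K$ — there is some $u\in\{p-cK,\dots,p-1\}\setminus S^+$ and some $w\in\{p+1,\dots,p+cK\}\setminus S^+$, so $\|uw\|\le 2cK$ while $\|uw\|_{G\setminus S} > 2ctK$, contradicting the $t$-spanner property of $G\setminus S$ on $V\setminus S^+$. Hence at least $\kappa/2$ good edges exist at position $p$, and sweeping $p$ through $\Theta(n/(tcK))$ evenly spaced positions shows $G$ has $\Omega\!\bigl(\kappa\cdot n/(tcK)\bigr)$ crossing edges of length in the window $(\kappa/4,\,2ctK]$ — but here I should instead charge each scale only $\Omega(n)$ edges by using the lower end $\kappa/2+1$ of the length range together with a careful choice of the position-spacing, exactly as in the proof of \thmref{simple-lower-bound-1d} where $k/2$ good edges at $n/(2ctk)$ positions give $\Omega(n)$ edges of length in $[\,k/2+1,\,2ctk\,]$.

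The main obstacle — and the one step I'd spend the most care on — is disjointness of the length windows across scales, which is what turns ``$\Omega(n)$ per scale'' into ``$\Omega(n f^*(n))$ total'' without double-counting edges. The window for scale $i$ is roughly $[\,f^i(k_0)/2,\; 2ct\,f^{i+1}(k_0)\,]$, and these windows can overlap because the upper end $2ct f^{i+1}(k_0)$ of scale $i$ may exceed the lower end $f^{i+1}(k_0)/2$ of scale $i+1$ when $f$ grows slowly; a given edge could be ``good'' at $O(\log_{k_0}(tc))$-ish many consecutive scales. The fix, exactly as in \thmref{simple-lower-bound-1d}, is to not use the full iterated sequence but a sparsified subsequence: apply the single-scale bound only for scales of the form $f^{a\cdot j}(k_0)$ for a fixed integer stride $a=a(c,t,k_0)$ large enough that $f^{a}(\kappa) \ge (4ct)\,\kappa$ for all $\kappa\ge k_0$ (possible since $f(x)/x$ is non-decreasing and bounded below away from $1$). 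This guarantees the windows of the retained scales are pairwise disjoint, so the edge counts add; and since we keep a $1/a = \Theta(1)$ fraction of the $f^*(n)+1$ scales, the total is still $\Omega(n f^*(n))$. A minor bookkeeping point is ensuring each retained $f^{aj}(k_0)$ is (or can be rounded up to) a multiple of $4$ so that $\kappa/4$ is an integer, and that $2\kappa < n$ so the position sweep is nonempty — both are automatic for $n$ large enough relative to the finitely many bad small scales, which get swallowed by the $\Omega$.
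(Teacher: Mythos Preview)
Your overall plan—run the good-edges argument at each scale $f^i(k_0)$, make the length-windows disjoint by thinning the scales, then sum—matches the paper's approach. But the execution has a real gap at the per-scale count, and your proposed fix does not work.

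You correctly derive that at scale $\kappa=f^i(k_0)$, $K=f(\kappa)=f^{i+1}(k_0)$, the argument yields $\kappa/2$ good edges per position and $\Theta(n/(tcK))$ disjoint positions, hence only $\Omega(\kappa n/(tcK))$ distinct edges in the window $(\kappa/4,\,2ctK]$. You then assert this can be upgraded to $\Omega(n)$ by ``a careful choice of the position-spacing, exactly as in the proof of \thmref{simple-lower-bound-1d}.'' That is the error: in the linear case the ratio $k/f(k)=1/c$ is a constant, so $k/2$ edges times $n/(2ctk)$ positions \emph{is} $\Theta(n)$. For superlinear $f$ the ratio $\kappa/f(\kappa)\to 0$, and no spacing choice helps—an edge of length up to $2ctK$ can be good at $\Theta(K/\kappa)$ positions no matter how you space them, so the distinct-edge count is stuck at $\Theta(\kappa n/K)$. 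Concretely, for $f(k)=k^2$ your sum over scales is $\sum_i n\,f^i(k_0)/f^{i+1}(k_0)=\sum_i n/f^i(k_0)$, which converges to $O(n)$, not the required $\Omega(n\log\log n)$.

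The paper avoids this by working with a much narrower length-window: at scale $f^i(k_0)$ it claims $\Omega(n)$ edges of length in $[f^i(k_0)/2,\,2t\,f^i(k_0)]$—a window of multiplicative width $4t=O(1)$, not stretching all the way up to $2t\,f^{i+1}(k_0)$ as yours does. With that window, disjointness across scales is immediate once $f$ is superlinear enough (there is an $i_0$ with $f^{i+1}(k_0)/2>2t\,f^i(k_0)$ for all $i>i_0$), and $\Omega(n)$ per scale gives $\Omega(n f^*(n))$ directly. Your wider window is exactly what the naive generalization of the linear argument produces, and it is not strong enough; you need to argue for the narrow window the paper uses.
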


\begin{proof}
  The proof is similar to the proof of \thmref{simple-lower-bound-1d}.
  We need only consider $f(k)\in\omega(k)$ since, otherwise we can apply
  \thmref{simple-lower-bound-1d}.  We group the edges of the graph
  $G$ into $\Omega(f^*(n))$ classes and show that each class contains
  $\Omega(n)$ vertices.  

  In particular, using the same argument one can show that, for
  any $i\in\{1,\ldots,f^*(n/t)\}$, any $f(k)$-robust $t$-spanner
  of $V$ has $\Omega(n)$ edges whose lengths are in the range
  $[f^i(k_0)/2,2tf^i(k_0)]$.  Since $f(k)$ is superlinear, there exists
  a constant $i_0$ such that, for any $i>i_0$, $f^{i+1}(k_0)/2 > 2tf^i(k_0)$.
  Thus, the number of edges in any $f(k)$-robust $t$-spanner of $V$ is at least
  \[  \Omega(n) \times (f^*(n/t)-i_0) = \Omega(n f^*(n)) \enspace . \qedhere \]
\end{proof}

\begin{cor}\corlabel{lower-bound}
  Let $V=\{1,\ldots,n\}$ and let $c>1$ and $t>1$ be constants.  Then any
  $f(k)$-robust $t$-spanner with
  \begin{enumerate}
    \item $f(k)\in O(k\log k)$ has $\Omega(n\log n/\log\log n)$ edges;
    \item $f(k)\in O(kc^{\sqrt{\log k}})$ has $\Omega(n\sqrt{\log n})$
      edges; and
    \item $f(k)\in O(k^{c})$ has $\Omega(n\log\log n)$ edges.
  \end{enumerate}
\end{cor}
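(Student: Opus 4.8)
The plan is to derive all three bounds from \thmref{general-lower-bound-1d}. The only wrinkle is that \thmref{general-lower-bound-1d} applies to functions $f$ that are convex, increasing on $[k_0,\infty)$, and satisfy $f(k_0+1)-f(k_0)>1$, whereas the corollary hypothesizes only an asymptotic upper bound on $f$. I would bridge this with a one-line monotonicity remark: if $G$ is $f(k)$-robust and $g(k)\ge f(k)$, then $G$ is also $g(k)$-robust, since the same superset $S^{+}$, of size at most $f(|S|)\le g(|S|)$, still witnesses robustness. Hence in each case it suffices to produce one well-behaved function $g$ (convex, increasing, with $g(k_0+1)-g(k_0)>1$) that dominates $f$ for $k\ge k_0$ — which, exactly as in the proof of \thmref{simple-lower-bound-1d}, is all the lower-bound argument ever uses — and whose iterated inverse $g^{*}(n)$ we can estimate; \thmref{general-lower-bound-1d} then yields $\Omega(n\,g^{*}(n))$ edges.

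For part~(3), take $g(k)=k^{c}$, which is convex and increasing for the constant $c>1$; after enlarging $c$ and $k_0$ slightly we get $g(k)\ge f(k)$ for $k\ge k_0$. Its iterates are $g^{j}(k_0)=k_0^{\,c^{j}}$, so $g^{*}(n)=\max\{\,j:c^{j}\le\log_{k_0}n\,\}=\Theta(\log\log n)$, giving the claimed $\Omega(n\log\log n)$. For part~(2), I would use the function from the footnote to \thmref{general-1d}: fix a constant $\delta>1$ (large enough in terms of $c$) and set $g(k)=k\,\delta^{\,2\sqrt{(\log k)/\log\delta}\,+1}$; then $g(k)\ge f(k)$ for all large $k$, and, with $k_0=\delta$, one has $g^{j}(\delta)=\delta^{(j+1)^{2}}$, so $g^{*}(n)=\max\{\,j:(j+1)^{2}\le\log_{\delta}n\,\}=\Theta(\sqrt{\log n})$ and we get $\Omega(n\sqrt{\log n})$. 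Verifying that this $g$ is convex and meets the step condition for a suitable $k_0$ is routine. I would also remark that a coarser choice such as $g(k)=k^{1+\eps}$ is \emph{not} good enough here, since it would only give the weaker bound $\Omega(n\log\log n)$; the $\sqrt{\log n}$ bound really needs the slowly-growing exponent.

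Part~(1) is the only part requiring a genuine (but short) computation, because $g(k)=Ck\log k$ has no convenient closed form for its iterates. Setting $b_{j}=\log g^{j}(k_0)$ gives $b_{j+1}=b_{j}+\log b_{j}+\log C$. While $g^{j}(k_0)\le n$ we have $b_{j}\le\log n$ and hence $b_{j+1}-b_{j}\le\log\log n+\log C$; since $b_{0}=\log k_0=O(1)$, the sequence needs at least $(\log n-b_{0})/(\log\log n+\log C)=\Omega(\log n/\log\log n)$ steps to exceed $\log n$, i.e.\ $g^{*}(n)=\Omega(\log n/\log\log n)$, which plugs into \thmref{general-lower-bound-1d} to give $\Omega(n\log n/\log\log n)$. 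The only obstacle in the whole proof is bookkeeping — checking convexity and the step condition for each chosen $g$, and keeping multiplicative constants from leaking into the leading order of $g^{*}$ — and none of the three estimates is conceptually hard once \thmref{general-lower-bound-1d} is available.
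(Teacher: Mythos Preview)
Your proposal is correct and follows exactly the approach implicit in the paper, which states the corollary without proof as an immediate consequence of \thmref{general-lower-bound-1d}. You have supplied the details the paper omits: the monotonicity observation that $f(k)$-robustness implies $g(k)$-robustness whenever $g\ge f$ (so one may replace the asymptotic hypothesis by a concrete convex dominating function), and the three routine computations of $g^{*}(n)$, including the telescoping recurrence for $g(k)=Ck\log k$ in part~(1) and the use of the footnoted function $g(k)=k\,\delta^{\,2\sqrt{(\log k)/\log\delta}+1}$ for part~(2).
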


Note that the lower bounds in Parts~2 and 3 of this corollary match the
corresponding upper-bounds while the lower-bound in Part~1 is off by a
factor of $\log\log n$.

\begin{rem}
  The dependence of our lower bounds on the value of $t$ is not
  given in the statements of Theorems~\ref{thm:simple-lower-bound-1d}
  and \ref{thm:general-lower-bound-1d} or in \corref{lower-bound}.
  However, it is readily extracted from their proofs.  In
  \thmref{simple-lower-bound-1d}, each value of $k$ shows the existence
  of $\Omega(n/t)$ edges and there are $\Omega(\log_t n)$ values of $k$,
  so the lower-bound is $\Omega((n\log n)/(t\log t))$.

  In \thmref{general-lower-bound-1d}, each value of $k$ shows the
  existence of $\Omega(n/t)$ edges, but now the number of values of $k$
  is $f^*(n/t) - f^*(x_0)$ where $x_0$ is the minimum value such that
  $f(x_0)\ge 4tx_0$. (Informally, $x_0$ is where the slope of $f$ exceeds
  $4t$.)  Thus, in \thmref{general-lower-bound-1d}, the lower-bound is
  $\Omega((n/t)(f^*(n)-f^*(x_0))))$.  It is fairly straightforward to
  apply this bound to the choices of $f$ used in \corref{lower-bound}
  or to other choices of $f$.  For example, applying it to Case~3 of
  \corref{lower-bound} we get $x_0=\Theta(t^{1/(c-1)})$ and the result
  that any $O(k^c)$-robust $t$-spanner has $\Omega((n/t)(\log\log n -
  \log\log t - \log (1/(c-1))))$ edges.
\end{rem}

\section{Higher Dimensions}
\seclabel{d-d}

In this section, we give a family of constructions for point sets
$V\subset\R^d$, $d\ge 1$.  These constructions make use of dumbbell
tree spanners \cite[Chapter~11]{ns07}.  In particular, they make use
of binary dumbbell trees, first used by Arya \etal\ \cite{admss95} in
the construction of low-diameter spanners.  A full description of the
construction (and proof of existence) of binary dumbbell trees can be
found in the notes by Smid \cite{s12}.

A \emph{(binary) dumbbell tree spanner} of $V$ is defined by a set of
$O(1)$ binary trees $\mathcal{T}=\{T_1,\ldots,T_p\}$, each having $n$
leaves.  Each node, $u$, in each of these trees is associated with one
element, $r(u)\in V$.  For each $i\in\{1,\ldots,p\}$, and each $x\in V$,
$T_i$ contains exactly one leaf, $u$, such that $r(u)=x$ and at most
one internal node, $w$, such that $r(w)=x$.
For any two points $x,y\in V$, there exists some tree, $T_i$,
with two leaves, $u$ and $v$, such that $r(u)=x$, $r(v)=y$ and
the path, $u,\ldots,v$ in $T_i$ defines a path $r(u),\ldots,r(v)$
whose Euclidean length is at most $t'\|xy\|$, where $t'>1$ is a
parameter in the construction of the dumbbell tree.  Thus, the graph
$G_\dumbbell=(V,E_\dumbbell)$ obtained by taking
\[
   E_\dumbbell = \bigcup_{i=1}^p\{r(u)r(v) : \text{$uv$ is an edge of $T_i$} \}
\]
is a $t'$-spanner of $V$.  

The size (number of edges) of a dumbbell tree spanner is clearly
$O(pn)=O(n)$.  For a fixed dimension, $d$, as a function of $t$ and as $t$
approaches 1, the number of trees, $p$, is $O(\log(1/(t-1))/(t-1)^d)$.
In particular, for $t=1+\eps$, $p\in O(\log(1/\eps)/\eps^d)$.


In the following, we will often treat the nodes of each tree, $T_i$,
in a dumbbell tree decomposition as if the nodes are elements of $V$.
This will happen, for example, when we make statements like ``the path in
$T_i$ from the leaf containing $x$ to the leaf containing $y$ has length
at most $t'\|xy\|$.''  We do this to avoid the cumbersome phraseology
required to distinguish between a node $u\in T_i$ and the node $r(u)\in V$
associated with $u$.  Hopefully the reader can tolerate this informality.

\begin{thm}\thmlabel{dd}
  Let $k_0$, $f$, and $f^*$ be defined as in \secref{iterated} and let
  $d\ge 1$ and $t>1$ be constants.  Let $V\subset \R^d$ be any set of
  $n$ points in $\R^d$.  Then, for any constant $t>1$,  there exists an
  $O(kf(k))$-robust $t$-spanner of $V$ with $O(nf^*(n))$ edges.
\end{thm}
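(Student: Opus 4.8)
The plan is to reduce the $d$-dimensional problem to an essentially one-dimensional one on each of the $O(1)$ binary dumbbell trees, and then re-run the probabilistic argument behind \thmref{general-1d}.

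First I would invoke the dumbbell tree spanner: fix binary dumbbell trees $T_1,\dots,T_p$, $p=O(1)$, with stretch parameter $t'$ (to be chosen at the end with $t'\le t$). It suffices to show that each $T_i$ can be augmented by $O(nf^*(n))$ extra edges, yielding a graph $H_i$, such that for every $S\subseteq V$ with $|S|=k$ there is a set $S_i^+\supseteq S\cap V(T_i)$ with $|S_i^+|=O(kf(k))$ for which, for every pair of leaves $x,y\in V(T_i)\setminus S_i^+$, the graph $H_i\setminus S$ contains an $x$--$y$ path of length at most $\mathrm{dist}_{T_i}(x,y)$, where $\mathrm{dist}_{T_i}(x,y)$ denotes the length of the path between the corresponding two leaves of $T_i$. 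Given this, let $G=\bigcup_i H_i$ and $S^+=\bigcup_i S_i^+$: then $G$ has $O(pnf^*(n))=O(nf^*(n))$ edges, $|S^+|=O(pkf(k))=O(kf(k))$, and for $x,y\notin S^+$ one chooses the tree $T_i$ realizing $\mathrm{dist}_{T_i}(x,y)\le t'\|xy\|$ and gets a path in $G\setminus S$ of length at most $t'\|xy\|\le t\|xy\|$.

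To build $H_i$, root $T_i$ arbitrarily and, writing $\mathrm{depth}(u)$ for the number of edges on the root-to-$u$ path, add for every node $u$ and every $j\in\{0,\dots,f^*(n)\}$ the edge joining $u$ to its ancestor at depth $\mathrm{depth}(u)-\upen{f^j(k_0)}$ whenever that ancestor exists (and keep all tree edges of $T_i$). Counting edges by their deeper endpoint shows $H_i$ has $O(nf^*(n))$ edges, and on any ancestor-to-descendant chain of $T_i$, indexed by depth, the edges of $H_i$ restrict to (a supergraph of) the graph $G_f$ of \secref{iterated}. For robustness I would imitate the proof of \thmref{general-1d} with a single shared random offset: given $S$ with $k=|S|$, pick $r$ uniformly from $\{0,\dots,\upen{f^{f^*(n)+1}(k_0)}-1\}$, give each depth $d$ the scale $j(d)$ equal to the least $j$ with $d-r\not\equiv 0\pmod{\upen{f^j(k_0)}}$, and --- exactly as in \thmref{general-1d} --- call a removed node $u$ \emph{cheap} when the window of depths around $\mathrm{depth}(u)$ that it kills on an ancestor-chain has size $O(f(4k))$. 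The expensive-vertex union bound of \thmref{general-1d} gives, with probability at least $3/4$, that every removed node is cheap; when that holds, every ancestor-chain through a removed $u$ carries an $H_i$-edge that jumps over $u$'s window. I would then let $S_i^+$ consist of $S\cap V(T_i)$ together with, for each cheap removed $u$, the at most $O(f(4k))$ \emph{ancestors} of $u$ lying in its window; since there are at most $k$ removed vertices this gives $|S_i^+|=O(kf(k))$. (It is this per-vertex accounting, replacing the telescoping sum used in \thmref{general-1d}, that costs the extra factor of $k$.) Finally, for leaves $x,y\notin S_i^+$ one splits the tree path at $c=\mathrm{LCA}(x,y)$ and applies the one-dimensional rerouting of \thmref{general-1d} to the chains $x$-to-$c$ and $c$-to-$y$ separately; since every jump edge is no longer than the sub-chain it replaces, the concatenated path has length at most $\mathrm{dist}_{T_i}(x,c)+\mathrm{dist}_{T_i}(c,y)=\mathrm{dist}_{T_i}(x,y)$.

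The step I expect to be the crux is the ``bend'' at $c$: a removed node $u$ whose window of killed depths straddles $c$ cannot be bypassed by a chain edge that stays on the $x$-to-$c$ chain --- the only jump clearing the window lands above $c$ --- and, because $u$'s subtree branches, one cannot afford to place the part of $u$'s window lying below $u$ into $S_i^+$ (that would be $2^{\Omega(f(k))}$ vertices). This is where the geometry of dumbbell trees is essential: along any root-to-leaf path of $T_i$ the handle lengths of the traversed dumbbells increase geometrically, so a bounded overshoot above $c$ lengthens a route by only a constant factor of $\mathrm{dist}_{T_i}(x,y)$ (absorbed by taking $t'$ slightly below $t$), and one must arrange the charging so that the leaves whose routing would require bypassing such a straddling window are precisely ones already forced into $S_i^+$ --- at most $O(f(k))$ of them per removed vertex. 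Carrying out this bookkeeping, so that the branching blow-up never materializes and only $O(kf(k))$ vertices enter $S^+$, and verifying that the $O(1)$ dumbbell trees jointly still cover every surviving pair, is the technical heart of the proof.
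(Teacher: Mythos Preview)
Your approach is genuinely different from the paper's, and as written it has a gap that is more basic than the bend-at-$c$ issue you highlight.

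The paper's proof is entirely deterministic and does not try to port the one-dimensional probabilistic argument at all.  For each scale $k'$ it removes from every dumbbell tree a set $X$ of $O(n/k')$ balanced separators, breaking each tree into pieces of size at most $k'$, and then builds a $k'$-fault-tolerant $t'$-spanner on $X$, which has $O(k'|X|)=O(n)$ edges.  When $|S|=k\le k'$, each deleted point kills only the $O(k')$ vertices in the piece(s) containing it, so $|S^+|=O(kk')$; for $x,y\notin S^+$ one follows the dumbbell-tree path from $x$ to $y$, shortcutting between the first and last $X$-vertices on that path via the fault-tolerant spanner.  Taking the union of these graphs over $k'\in\{f^i(k_0):0\le i\le f^*(n)\}$ gives $O(nf^*(n))$ edges, and for any $k$ one applies the layer with $k\le k'<f(k)$.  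No random offset, no depth-windows, no rerouting along chains.

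The gap in your argument is on the descendant side of a removed internal node, not at the LCA.  You place into $S_i^+$ only the $O(f(k))$ \emph{ancestors} of each removed internal node $u$ lying in $u$'s depth-window, and then invoke the one-dimensional rerouting along each leaf-to-LCA chain.  But that rerouting requires the starting leaf $x$ itself to lie outside the window of every removed vertex on the chain; otherwise the monotone path in $G'\setminus S$ never visits $x$.  Hence every leaf that is a descendant of $u$ with depth inside $u$'s window must also go into $S_i^+$.  In a binary tree, $u$ can have up to $2^{\Theta(f(k))}$ such leaf-descendants (take $u$ atop a complete binary subtree of depth comparable to the window), so your $S_i^+$ is not large enough, and enlarging it to cover these leaves destroys the $O(kf(k))$ bound.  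The geometric growth of dumbbell handle lengths that you invoke controls \emph{edge lengths} along a single root-to-leaf path; it says nothing about the \emph{number} of leaves at small depth below $u$, and it does not let you route out of such a leaf without crossing $u$'s window.  The separator-plus-fault-tolerant-spanner construction in the paper sidesteps this entirely, since a deleted vertex kills only its size-$O(k')$ piece regardless of how the tree branches.
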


\begin{proof} 
  Fix a value $k'>1$ and recall that, in any binary tree, $T$, with $n$
  nodes, there exists a vertex whose removal disconnects $T$ into at most
  3 components each of size at most $n/2$.  Repeatedly applying this fact
  to any component of size greater than $k'$ yields a set of $O(n/k')$
  vertices whose removal disconnects $T$ into components each of size
  at most $k'$ \cite[Lemma~12.1.5]{ns07}; see \figref{dumbbell-chop}.

  Perform the above decomposition for each of the trees $T_1,\ldots,T_p$
  defining a dumbbell tree $t'$-spanner, $G_\dumbbell$, of $V$ with
  $t'=\sqrt{t}$.  This yields a set, $X$, of $O(n/k')$ vertices
  whose removal disconnects every dumbbell tree into components
  each of size at most $k'$.  Using any of the $k'$-fault-tolerant
  spanner constructions cited in the introduction, we can construct a
  $k'$-fault-tolerant $t'$-spanner for $X$ having $O(k'|X|)=O(n)$ edges.
  Let $G_{k'}=(V,E_{k'})$ denote the graph whose edge set contains
  all edges of the dumbbell spanner $G_\dumbbell$ and all edges of a
  $k'$-fault-tolerant spanner on $X$.

  \begin{figure}
    \begin{center}
      \includegraphics{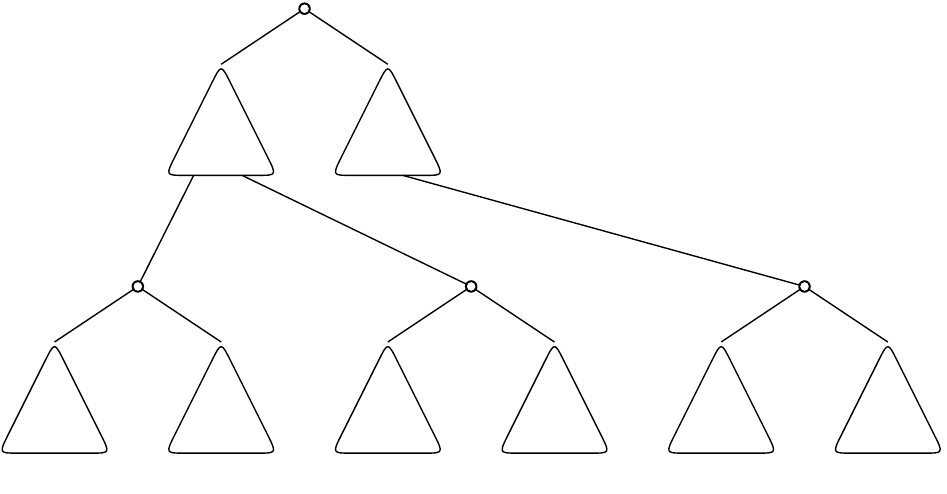}
    \end{center}
    \caption{A dumbbell tree decomposed in components of size $O(k')$
    by the removal of a set $X$ of $O(n/k')$ vertices (each denoted
    by $\circ$).}
    \figlabel{dumbbell-chop}
  \end{figure}

  Suppose that we are now given a set $S\subseteq V$, $|S|=k\le k'$.
  Any vertex $x\in S$ appears at most twice in each tree $T_i$.  For each
  $i\in\{1,\ldots,p\}$, we say that $x$ \emph{kills} all the vertices
  in any component of $T_i\setminus X$ that contains $x$.  Furthermore,
  if $x$ is an element of $X$, then $x$ kills all the vertices in the (at
  most 3) components of $T_i$ whose that have a vertex adjacent to $x$.
  The total number of vertices killed by $x$ is therefore $O(pk')=O(k')$;
  see \figref{dumbbell-kill}.

  \begin{figure}
    \begin{center}
      \includegraphics{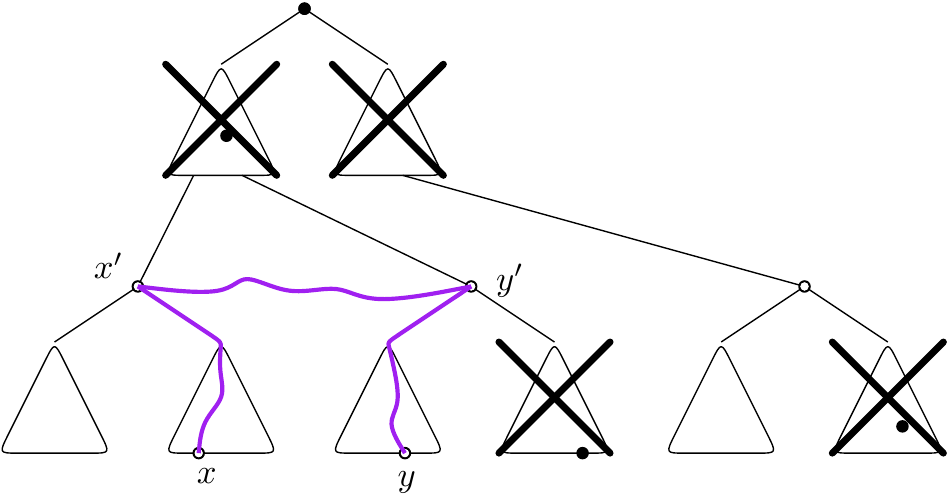}
    \end{center}
    \caption{The set $S$ (whose elements are denoted by \textbullet) kills
      $O(|S|k)$ vertices in each dumbbell tree.}
    \figlabel{dumbbell-kill}
  \end{figure}
  
  Let $S^+$ be the set of all vertices killed by all vertices in $S$.
  The size of $S^+$ is $O(kk')$. Consider some pair of vertices $x,y\in
  V\setminus S^+$.  There exists a tree $T_i$ such that the path, in
  $T_i$, from the leaf containing $x$ to the leaf containing $y$ has
  length at most $t'\|xy\|$.  If $x$ and $y$ are in the same component of
  $T_i\setminus X$ then this path is also a path in $G_{k'}\setminus S$.

  If $x$ and $y$ are in different components of $T_i\setminus X$
  then consider the path from the leaf containing $x$ to the leaf
  containing $y$ in $T_i$.  Let $x'$ denote the first node on this
  path that is in $X$ and let $y'$ denote the last node on this path
  that is in $X$.  The graph $G_{k'}\setminus S$ contains a path, from
  the leaf containing $x$, to $x'$, to $y'$, and then finally to $y$,
  where the path from $x'$ to $y'$ uses the $k'$-fault tolerant spanner;
  see \figref{dumbbell-kill}.  Therefore,
  \begin{align*}
    \|xy\|_{G\setminus S} 
       & \le \|xx'\|_{T_i} + t'\|x'y'\| + \|y'y\|_{T_i} \\
       & \le t'(\|xx'\|_{T_i} + \|x'y'\| + \|y'y\|_{T_i}) \\
       & \le t'(\|xx'\|_{T_i} + \|x'y'\|_{T_i} + \|y'y\|_{T_i}) \\
       &  = t'\|xy\|_{T_i} \\
       & \le (t')^2\|xy\|  \\
       & = t\|xy\| \enspace .
  \end{align*}
  Since this is true for every pair $x,y\in V\setminus S^+$, this means
  that $G_{k'}\setminus S$ is a $t$-spanner of $V\setminus S^+$.

  We have just shown how to construct a graph $G_k$ that has $O(n)$
  edges and is $O(kk')$ robust provided that $|S|\le k'$.  To obtain
  a graph that is $kf(k)$-robust for any value of $k$, we take the
  graph $G$ containing the edges of each $G_{k'}$ for $k'\in\{f^i(k_0) :
  i\in\{0,\ldots,f^*(n)\}\}$.  The graph $G$ has $O(nf^*(n))$ edges.
  For any set $S\in \binom{V}{k}$, we can apply the above argument
  on the subgraph $G_{k'}$ with $k \le k' < f(k)$, to show that $G$
  is $O(kf(k))$-robust.
\end{proof}

Applying \thmref{dd} with different functions $f(k)$ yields the
following results.
\begin{cor}\corlabel{dd}
  For any constants $d >1$, $t>1$, $\eps>0$, and any set $V$ of $n$ points
  in $\R^d$, there exist $f(k)$-robust $t$-spanners $G=(V,E)$ with
  \begin{enumerate}
    \item $f(k)\in O(k^2)$ and $O(n\log n)$ edges;
    \item $f(k)\in O(k^2(1+\eps)^{\sqrt{\log k}})$ and $O(n\sqrt{\log n})$
      edges; and
    \item $f(k)\in O(k^{2+\eps})$ and $O(n\log\log n)$ edges.
  \end{enumerate}
\end{cor}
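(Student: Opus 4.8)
The plan is to obtain each of the three items by a single application of \thmref{dd}, choosing the function that plays the role of $f$ in that theorem so that the resulting robustness $O(k\cdot f(k))$ and size $O(nf^*(n))$ match the claimed bounds. To avoid a clash with the $f$ named in the corollary's statement, I write $g$ for the function fed into \thmref{dd}. Each part then reduces to three routine steps: pick a $g$ satisfying the hypotheses of \secref{iterated} (convex, increasing on $[k_0,\infty)$, with $g(k_0+1)-g(k_0)>1$ for a constant $k_0\ge 1$), compute the iterated inverse $g^*(n)$, and verify that $k\cdot g(k)$ lies in the advertised robustness class.

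Two of the cases are immediate. For Part~1 I would take $g(k)=2k$ with $k_0=1$: it is linear, hence convex and increasing, and $g(2)-g(1)=2>1$; since $g^i(1)=2^i$ we get $g^*(n)=\floor{\log_2 n}\in\Theta(\log n)$ while $k\cdot g(k)=2k^2\in O(k^2)$, so \thmref{dd} produces an $O(k^2)$-robust $t$-spanner with $O(n\log n)$ edges. For Part~3 I would take $g(k)=k^{1+\eps}$ with $k_0=2$: then $g''>0$ on $[2,\infty)$, so $g$ is convex and increasing, and $g(3)-g(2)=3^{1+\eps}-2^{1+\eps}>1$; iterating gives $g^i(2)=2^{(1+\eps)^i}$, so $g^*(n)=\max\{i:(1+\eps)^i\le\log_2 n\}\in\Theta(\log\log n)$ while $k\cdot g(k)=k^{2+\eps}$, and \thmref{dd} produces an $O(k^{2+\eps})$-robust $t$-spanner with $O(n\log\log n)$ edges.

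Part~2 is the one deserving care, and is where I expect the only (still minor) obstacle. Here I would recycle the function from the footnote in the proof of \thmref{general-1d}, namely $g(k)=k\,\delta^{2\sqrt{(\log k)/\log\delta}+1}$, with a constant $\delta=1+\eps'>1$ and $k_0$ a suitably large constant. Substituting $u=\log_\delta k$ puts $g$ in the form $g(k)=\delta^{(\sqrt u+1)^2}$, from which $\sqrt{\log_\delta g(k)}=\sqrt{\log_\delta k}+1$; iterating from $k_0$ therefore gives $g^i(k_0)=\delta^{(\sqrt{\log_\delta k_0}+i)^2}$ (which is the footnote's identity $g^j(\delta)=\delta^{(j+1)^2}$ when $k_0=\delta$), whence $g^*(n)=\floor{\sqrt{\log_\delta n}-\sqrt{\log_\delta k_0}}\in\Theta(\sqrt{\log n})$. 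For the robustness, write the non-linear factor in base $2$ as $\delta^{2\sqrt{(\log k)/\log\delta}}=(2^{2\sqrt{\log\delta}})^{\sqrt{\log k}}$; since $2^{2\sqrt{\log\delta}}\to 1$ as $\delta\to 1^+$, picking $\eps'$ small enough that $2^{2\sqrt{\log\delta}}\le 1+\eps$ gives $g(k)\in O(k(1+\eps)^{\sqrt{\log k}})$ and hence $k\cdot g(k)\in O(k^2(1+\eps)^{\sqrt{\log k}})$, so \thmref{dd} yields a $t$-spanner with $O(n\sqrt{\log n})$ edges and this robustness. The step I would be most careful about is confirming that this $g$ satisfies the hypotheses of \secref{iterated}: monotonicity is clear, $g(k)/k=\delta^{2\sqrt u+1}$ is manifestly non-decreasing (and this non-decreasingness of $g(k)/k$ is the only consequence of the hypotheses actually invoked in the proofs of \secref{iterated}), and $g$ is convex on $[k_0,\infty)$ once $k_0$ is a large enough constant — a routine calculus check.
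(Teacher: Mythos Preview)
Your proposal is correct and follows exactly the approach the paper intends: the paper states the corollary as a direct consequence of \thmref{dd} ``with different functions $f(k)$'' and gives no further details, while you have spelled out the natural choices of $g$ (namely $2k$, the footnote function $k\delta^{2\sqrt{(\log k)/\log\delta}+1}$, and $k^{1+\eps}$) and verified the required growth, iteration, and hypothesis checks. These are the same functions that underlie the parallel one-dimensional corollary after \thmref{general-1d}, so your choices are precisely what the paper has in mind.
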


\begin{rem}
  Note that, like our lower bounds, \thmref{dd} and \corref{dd} do
  not express the relationship between the number of edges and the
  spanning ratio, $t$.  As before, this relationship is not hard
  to work out. The number, $p$, of spanning trees in the dumbbell
  tree spanner is $O(\log(1/\eps)/\eps^d)$, where $\eps=1-\sqrt{t}$.
  Each $k'$-fault-tolerant spanner has $O(n\eps^{d-1})$ edges \cite{l99}
  and we construct one of these for $f^*(n)$ different values of $k'$.
  Thus, the total number of edges in our constructions is
  $O(n(f^*(n)/\eps^{d-1} + \log(1/\eps)/\eps^d))$.
\end{rem}

\subsection{Linear-Size (Kind of) Robust Spanners}

The lower bound in \thmref{general-lower-bound-1d} shows that linear-size
$f(k)$-robust $t$-spanners do not exist for any function $f(k)$.  In this
section, we show that there are linear sized graphs that satisfy a weaker
definition of robustness.

We say that a graph $G=(V,E)$ is \emph{$f(k,n)$-hardy} if, for every
subset $S\subseteq V$, there exists a superset $S^+\supseteq S$,
$|S^+|\le f(|S|,|V|)$, such that $G\setminus S$ is a $t$-spanner of
$V\setminus S^+$.  Note that this definition is almost identical to that
of robustness except that the size of $S^+$ may also depend on $|V|$.  In particular, any $f(k)$-robust $t$-spanner is also an $f'(k,n)$-hardy $t$-spanner with $f'(k,n)=f(k)$.

\begin{thm}\thmlabel{linear-size}
  If $f(k,n)$-hardy $t$-spanners with $O(n\cdot s(n))$ edges exist for
  all $V\subset\R^d$, then $O(f(k,n)\cdot s(n))$-hardy $t$-spanners with
  $O(n)$ edges exist for all $V\subset\R^d$.
\end{thm}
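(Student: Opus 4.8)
The plan is to use the single-scale heart of the construction in the proof of \thmref{dd}, but with the $k'$-fault-tolerant spanner on the separator replaced by an $f(k,n)$-hardy spanner obtained from the hypothesis. The advantage is that one scale now suffices: a hardy spanner on the separator recovers a spanner after \emph{any} deletion (merely enlarging the excluded set), so we avoid the $f^*(n)$ separate scales \thmref{dd} needs, keeping the edge count at $O(n)$. Fix $d$ and $t>1$, set $t'=\sqrt t$ (the hypothesis being understood to hold for every constant ratio $>1$, hence at $t'$), let $V\subset\R^d$ with $|V|=n$, and assume --- harmlessly --- that $s$ is non-decreasing with $s(n)\ge 1$ (a $t$-spanner of $V$ is connected, so has $\Omega(n)$ edges) and that $f$ is non-decreasing in each argument. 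Build a binary dumbbell tree $t'$-spanner $G_\dumbbell$ of $V$ with trees $T_1,\dots,T_p$, $p=O(1)$; run the balanced-separator argument of \thmref{dd} with threshold $\lceil s(n)\rceil$ to get a node set $X$, $|X|=O(n/s(n))$, whose removal splits every $T_i$ into components of size $O(s(n))$; treating $X$ as a point set, invoke the hypothesis to get an $f(k,n)$-hardy $t'$-spanner $H$ of $X$ with $O(|X|\cdot s(|X|))=O\big((n/s(n))\cdot s(n)\big)=O(n)$ edges; and put $G=G_\dumbbell\cup H$, of size $O(n)$.

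Given $S\subseteq V$ with $k=|S|$, let $Y=(S\cap X)^+$ be the superset furnished by the hardiness of $H$ applied to $S\cap X$, so $Y\supseteq S\cap X$, $|Y|\le f(|S\cap X|,|X|)\le f(k,n)$, and $H\setminus(S\cap X)$ is a $t'$-spanner of $X\setminus Y$. Borrowing the ``kills'' terminology of \thmref{dd}: each $v\in S\cup Y$ kills, in every $T_i$, all vertices of the component of $T_i\setminus X$ containing $v$ and --- when $v\in X$ --- all vertices of the (at most three) components incident to $v$; since each component has size $O(s(n))$ and $p=O(1)$, each vertex kills $O(s(n))$ vertices. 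Set $S^+=S\cup Y\cup\{\text{vertices killed by }S\cup Y\}$. Then $S^+\supseteq S$, and since $f(k,n)\ge k$ in any hardy spanner (because $S^+\supseteq S$), $|S^+|\le k+|Y|+O\big((k+|Y|)\,s(n)\big)=O\big(f(k,n)\,s(n)\big)$, which is the bound we want.

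It remains --- and this is where the actual work lies --- to check that $G\setminus S$ is a $t$-spanner of $V\setminus S^+$, which parallels \thmref{dd}. For $x,y\in V\setminus S^+$, choose $T_i$ with $\|xy\|_{T_i}\le t'\|xy\|$. If the leaf-to-leaf path in $T_i$ stays in one component of $T_i\setminus X$, that component contains no vertex of $S$ (else $x$ is killed), so the path is present in $G\setminus S$. Otherwise let $x'$, $y'$ be the first and last nodes of $X$ on the path; the crucial new point (over \thmref{dd}, where the fault-tolerant spanner automatically spanned every pair outside $S$) is that $x',y'\notin S$ \emph{and} $x',y'\notin Y$ --- each such node is incident to the $X$-free component containing $x$ (resp.\ $y$) and would otherwise kill $x$ (resp.\ $y$) --- so $x',y'\in X\setminus Y$. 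Consequently the leaf-$x$-to-$x'$ and $y'$-to-leaf-$y$ portions lie in $G\setminus S$ (their interiors lie in $S$-free components and $x',y'\notin S$), while $H\setminus(S\cap X)$ supplies an $x'$-to-$y'$ path of length $\le t'\|x'y'\|\le t'\|x'y'\|_{T_i}$ that avoids $S$; concatenating, bounding the middle coefficient by $t'$, and using $\|xx'\|_{T_i}+\|x'y'\|_{T_i}+\|y'y\|_{T_i}=\|xy\|_{T_i}$ yields $\|xy\|_{G\setminus S}\le t'\|xy\|_{T_i}\le(t')^2\|xy\|=t\|xy\|$. The main obstacle is making this routing-through-$X$ argument airtight --- chiefly the claim $x',y'\notin Y$ and the path-additivity identity above, plus degenerate cases such as the leaf of $x$ or $y$ lying in $X$ --- but all of this follows the corresponding step of \thmref{dd}.
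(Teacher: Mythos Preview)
Your proof is correct and follows the same approach as the paper: decompose the dumbbell trees at threshold $s(n)$ to obtain a separator $X$ of size $O(n/s(n))$, place a hardy spanner on $X$, and argue as in \thmref{dd}. You are in fact more careful than the paper's own proof, which just invokes ``the same argument used to prove \thmref{dd}'' --- you correctly make explicit the new subtlety that the hardy spanner on $X$ only spans $X\setminus Y$ (not $X\setminus(S\cap X)$), so one must also kill the components incident to $Y$ and verify $x',y'\notin Y$; your handling of this, the monotonicity assumptions on $f$ and $s$, and the $t'=\sqrt{t}$ issue are all sound.
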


\begin{proof}
  Perform the same dumbbell tree decomposition used in the proof of
  \thmref{dd} to obtain a set $X$ of $O(n/s(n))$ nodes whose removal
  partitions each dumbbell tree into components of size at most $s(n)$.
  Construct an $f(k,n)$-hardy $t$-spanner on the elements of $X$.
  The size of the resulting graph is 
  \begin{align*}
    O(n) + O(|X|\cdot s(|X|)) 
       & = O(n) + O\left(\frac{n}{s(n)}\cdot s\left(\frac{n}{s(n)}\right)\right) \\
       & \le O(n) + O\left(\frac{n}{s(n)}\cdot s(n)\right) \\
       & = O(n) \enspace . 
  \end{align*}
  The same argument used to prove \thmref{dd} shows that the resulting
  construction is $O(f(k,n)s(n))$-hardy.  (Each vertex of $X$ that
  belongs to $S$ results in the loss of at most 3 components in each
  dumbbell tree, each of size at most $s(n)$.)
\end{proof}

The following corollary is obtained by combining \thmref{linear-size}
with some of our upper-bound constructions:
\begin{cor}\corlabel{linear-size}
  For any constant $\epsilon >0$, there exist linear size
  \begin{enumerate}
    \item $O(k\log k\log n)$-hardy $1$-spanners of any
      $V\subset \R$;
    \item $O(k^{1+\eps}\log\log n)$-hardy $1$-spanners of any
      $V\subset \R$;
    \item $O(k^2\log n)$-hardy $t$-spanners of any
      $V\subset \R^d$; and
    \item $O(k^{2+\eps}\log\log n)$-hardy $t$-spanners of any
      $V\subset \R^d$.
  \end{enumerate}
\end{cor}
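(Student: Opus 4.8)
The plan is to invoke \thmref{linear-size} four times, each time feeding it one of the robust-spanner constructions established earlier in the paper. The key preliminary observation, already noted in the text just before \thmref{linear-size}, is that every $f(k)$-robust $t$-spanner is in particular an $f'(k,n)$-hardy $t$-spanner with $f'(k,n)=f(k)$, and that all of our earlier constructions exist for \emph{every} point set of the appropriate dimension (which is what the hypothesis of \thmref{linear-size} demands). Hence it suffices, for each part, to write one of our constructions in the form ``an $f(k,n)$-hardy $t$-spanner with $O(n\cdot s(n))$ edges'' and then read off from \thmref{linear-size} that a linear-size $O(f(k,n)\cdot s(n))$-hardy $t$-spanner exists.

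Carrying this out: for Part~1 I would start from the $O(k\log k)$-robust $1$-spanner of \thmref{klogk-1d}, which has $O(n\log n)$ edges, so here $f(k,n)=O(k\log k)$ and $s(n)=\log n$, and \thmref{linear-size} yields a linear-size $O(k\log k\log n)$-hardy $1$-spanner. For Part~2 I would start from \thmref{general-1d} applied with $f(k)=k^{1+\eps}$; since $k^{1+\eps}=k\cdot 2^{\Theta(\log k)}\in k2^{\Omega(\sqrt{\log k})}$, Part~2 of that theorem gives an $O(k^{1+\eps})$-robust $1$-spanner with $O(nf^*(n))=O(n\log\log n)$ edges, so $f(k,n)=O(k^{1+\eps})$ and $s(n)=\log\log n$, and \thmref{linear-size} yields a linear-size $O(k^{1+\eps}\log\log n)$-hardy $1$-spanner. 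Parts~3 and~4 are the identical computation in $\R^d$, starting respectively from Part~1 of \corref{dd} (an $O(k^2)$-robust $t$-spanner with $O(n\log n)$ edges, so $s(n)=\log n$) and Part~3 of \corref{dd} (an $O(k^{2+\eps})$-robust $t$-spanner with $O(n\log\log n)$ edges, so $s(n)=\log\log n$).

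There is no genuine obstacle here; the proof is pure bookkeeping, and only two small points deserve a sentence of care. First, each size function that appears, namely $\log n$ and $\log\log n$ (and $\sqrt{\log n}$ if one additionally wants the intermediate variants), is non-decreasing, which is exactly what is used for the inequality $s(n/s(n))\le s(n)$ inside the proof of \thmref{linear-size}, so the recursive term there really does collapse to $O(n)$. Second, Parts~1 and~2 require \thmref{linear-size} with $t=1$; this is legitimate because for a one-dimensional point set the dumbbell-tree decomposition used in that proof may be replaced by the single path $x_1x_2\cdots x_n$, which is a $1$-spanner and admits exactly the same recursive centroid splitting into pieces of size $O(s(n))$. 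Granting these observations, each of the four applications of \thmref{linear-size} goes through verbatim, establishing the four claims.
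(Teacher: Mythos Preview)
Your proposal is correct and matches the paper's approach exactly: the paper simply states that the corollary is obtained by combining \thmref{linear-size} with the earlier upper-bound constructions, which is precisely the bookkeeping you carry out. Your additional remark about the $t=1$ case in one dimension---replacing the dumbbell trees by the sorted path $x_1x_2\cdots x_n$ so that the centroid splitting still applies---is a legitimate detail that the paper glosses over.
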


\begin{rem}
One can use the same argument used to prove
Theorems~\ref{thm:simple-lower-bound-1d} and
\ref{thm:general-lower-bound-1d} to study the hardiness/space tradeoff
in hardy spanners.  For example, one can show that any $f(k)g(n)$-hardy
$t$-spanner of the $1\times n$ grid has $\Omega((nf^*(n)/g(n))$ edges.
This implies, for example, that Part~2 of \corref{linear-size} is tight;
it is not possible to asymptotically reduce the dependence on $k$ or $n$
while keeping a linear number of edges (apply the tradeoff result with
$f(k)\in O(k^{1+\eps})$, and $f^*(n)=g(n)\in \Theta(\log\log n)$).
\end{rem}

\section{Summary}
\seclabel{summary}

We have introduced the notion of $f(k)$-robust $t$-spanners and given
upper and lower-bounds on the number of edges in such spanners.  Our lower
bounds show that, for any $f$, $f(k)$-robust spanners sometimes require
a superlinear number of edges, even in one dimension.  Our 1-dimensional
constructions nearly match this lower-bound except when the function
$f$ is nearly linear.

\paragraph{Open problem: Tighter bounds.}
We understand the situation less clearly in two and higher dimensions.
The lower bounds show that $f(k)$-robust $t$-spanners must have
$\Omega(nf^*(n))$ edges, but we have only been able to obtain
$O(kf(k))$-robust $t$-spanners with $O(nf^*(n))$ edges.  Closing this
gap is the main open problem left by this work.

To gain some intuition about which is closer to the truth, the lower
bound or the upper bound, one can study the $\sqrt{n}\times\sqrt{n}$
grid graph; see \figref{grid}.  An argument similar to the proof of
\thmref{klogk-1d}, based on randomly shifting a quadtree, shows that
this graph is an $O(k^2)$-robust $3$-spanner.  Therefore, the vertices
of the $\sqrt{n}\times\sqrt{n}$ grid admit a linear-size $O(k^2)$-robust
3-spanner.  In contrast, \thmref{general-lower-bound-1d} shows that any
$f(k)$-robust $t$-spanner for the $1\times n$ grid has superlinear size.
This suggests that one dimension is the hardest case:

\begin{conj}
If $f(k)$-robust $t$-spanners with $s_f(n)$ edges exist for all one-dimensional point sets, then $O(f(k))$-robust $t$-spanners with $s_f(n)$ edges exist for all point sets in $\R^d$.
\end{conj}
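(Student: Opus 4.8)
The plan is to reduce the $d$-dimensional problem to the one-dimensional one using the dumbbell tree machinery, in the same spirit as the proofs of \thmref{dd} and \thmref{linear-size}. The key observation is that a dumbbell tree spanner reduces distance queries in $\R^d$ to path queries along $O(1)$ binary trees, and a binary tree is, combinatorially, a "one-dimensional-like" object: an in-order traversal of its leaves gives a linear order, and a subtree (or a contiguous band of the tree) behaves like an interval. So I would try to transport a one-dimensional $f(k)$-robust construction onto the leaf order of each dumbbell tree $T_i$.

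Concretely, here is the sequence of steps I would carry out. First, fix the dumbbell tree $t'$-spanner of $V$ with $t' = t$ (the squaring trick of \thmref{dd} is not needed if we are content with matching $t$; if we need $t$ exactly we can take $t'=\sqrt t$ and augment), giving trees $T_1,\dots,T_p$, $p = O(1)$. Second, for each $T_i$, let $\sigma_i$ be the left-to-right order of its $n$ leaves; think of $\sigma_i$ as a one-dimensional point set of size $n$ (the actual coordinates are irrelevant — what matters is the order, and the one-dimensional constructions $G_f$, $G_{2\times}$ are purely order-based). Apply the hypothesized one-dimensional $f(k)$-robust construction to $\sigma_i$, obtaining a graph $H_i$ on the leaves of $T_i$ with $s_f(n)$ edges. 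Third, let $G$ be the union, over $i\in\{1,\dots,p\}$, of the tree edges of $T_i$ together with the edges of $H_i$ (all reinterpreted as edges between the associated points of $V$); then $G$ has $O(p\cdot s_f(n)) = O(s_f(n))$ edges, and since $s_f(n) = \Omega(n)$ this is $O(s_f(n))$ as claimed. Fourth, given $S\subseteq V$ with $|S|=k$, for each $i$ let $S_i$ be the set of leaves of $T_i$ whose associated point lies in $S$; since each point appears as at most one leaf of $T_i$, $|S_i|\le k$. Apply one-dimensional robustness to $S_i$ in $\sigma_i$ to get $S_i^+$ with $|S_i^+|\le f(k)$ such that $H_i\setminus S_i$ is a $1$-spanner, along the leaf order, of $\sigma_i\setminus S_i^+$. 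Take $S^+$ to be $S$ together with all points of $V$ appearing as a leaf in some $S_i^+$; then $|S^+| \le k + p f(k) = O(f(k))$. Fifth, verify the spanner property: for $x,y\in V\setminus S^+$, pick the tree $T_i$ realizing $\|xy\|_{T_i}\le t'\|xy\|$; the leaf-to-leaf path in $T_i$ visits a sequence of leaves whose orders form an increasing (or decreasing) run in $\sigma_i$, so $H_i\setminus S_i$ already contains a spanning path between those two leaves — but we must route this path so that its \emph{Euclidean} length is still $\le t'\|xy\|$, and this is where the main difficulty lies.

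The hard part will be the fifth step: the one-dimensional construction only guarantees a path that is a $1$-spanner \emph{in the one-dimensional metric of $\sigma_i$}, i.e. with respect to the ordinal positions, whereas the dumbbell tree guarantees a short path \emph{in the tree metric} which in turn bounds the Euclidean distance. These two metrics on the leaves of $T_i$ are not the same, so a detour that is cheap in the ordinal metric could be expensive in the tree/Euclidean metric. The resolution I would pursue is to replace the ordinal metric on $\sigma_i$ by the metric induced by $T_i$ itself — that is, run the one-dimensional construction not on $\{1,2,\dots,n\}$ but on the sequence of leaves weighted so that consecutive-leaf distances equal tree distances — and then observe that the one-dimensional constructions of \secref{one-d} are "path-additive": $G_f\setminus S$ contains the monotone path visiting all surviving vertices in order, and the \emph{length} of the sub-path between two surviving vertices equals the sum of the surviving consecutive gaps, which is at most the corresponding distance in $T_i$ (since we only ever skip vertices, never add new ones, and the "jump" edges $x_a x_b$ can be realized in $T_i$ by a path through the tree of length at most $\|x_a x_b\|_{T_i}$, using the tree-convexity of bands). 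Making this precise — that every jump edge of the one-dimensional construction lies "inside a band" of the dumbbell tree and hence can be charged to a tree path of comparable length, and that the detour around killed vertices stays within such a band — is the technical heart of the argument. Once that is established, the same chain of inequalities as in the proof of \thmref{dd} gives $\|xy\|_{G\setminus S}\le t'\|xy\|_{T_i}\le (t')^2\|xy\|$, and choosing $t'=\sqrt t$ (augmenting the one-dimensional pieces correspondingly) yields a $t$-spanner, proving the conjecture.

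I should flag one subtlety that may force a small loss: a dumbbell tree path between two leaves need not be monotone in the leaf order of $T_i$ — it goes up to a common ancestor and back down — so "the leaves on the path form an increasing run" is false in general. The fix is to apply the one-dimensional robustness argument not to the global leaf order but to each of the $O(\log n)$ or $O(1)$ "caterpillar spines" of the dumbbell decomposition separately, or equivalently to exploit that each dumbbell is itself a path of clusters; within a single dumbbell the leaf order \emph{is} essentially linear, and the one-dimensional construction applies cluster-by-cluster. This is precisely the place where one must be careful, and it is the step I expect to consume most of the write-up.
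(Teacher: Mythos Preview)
The statement is a \emph{conjecture}: the paper lists it as an open problem in \secref{summary} and offers no proof. There is nothing in the paper to compare your proposal against.

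On the proposal itself, the gap you flag in your last two paragraphs is not a technicality to be written up later; it is the entire content of the conjecture. Two concrete obstructions. First, the hypothesis is a black box: it promises that $f(k)$-robust $t$-spanners with $s_f(n)$ edges exist for every one-dimensional point set, but says nothing about their structure. You cannot assume the one-dimensional spanner is ``path-additive,'' that its spanning paths are monotone, or that it resembles $G_f$ or $G_{2\times}$; those are features of the specific constructions in \secref{one-d}, not consequences of robustness. Invoking them already uses more than the hypothesis grants, so at best you would be proving something strictly weaker than the conjecture.

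Second, even if you help yourself to the explicit construction $G_f$, the metric mismatch you identify is genuine and unresolved. The dumbbell guarantee is that the \emph{tree path} between two leaves --- which runs through internal nodes $r(\cdot)$, not through other leaves --- has small Euclidean length. Your auxiliary one-dimensional spanner on the leaf order $\sigma_i$ produces detour edges between \emph{leaves}, and the Euclidean length of such an edge is not bounded by anything in the dumbbell machinery (the dumbbell inequality runs the other way: tree paths are short in Euclidean distance, not leaf-to-leaf Euclidean edges short in tree distance). Moreover, a point of $S$ may occur as $r(w)$ for an internal node $w$ of $T_i$, and removing it disconnects $T_i$ at $w$; your $S_i$ tracks only leaf occurrences and does not see this damage. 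The appeals to ``tree-convexity of bands'' and to per-dumbbell linearity in your final paragraph are restatements of the difficulty, not resolutions of it. The authors use dumbbell trees throughout \secref{d-d} and lose exactly a factor of $k$ in \thmref{dd} for precisely this reason; the conjecture is open because this reduction does not go through.
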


\paragraph{Open problem: Low weight.}

In many cases, the cost of building a network is more closely related to
the total length (rather than number) of edges.  In these cases, one
attempts to construct a graph whose total edge length is close to that
of the minimum spanning tree of $V$. The same lower-bound argument used
in \thmref{general-lower-bound-1d} shows that, in general, $f(k)$-robust
spanners may require edges whose total length is $\Omega(f^*(n))$ times
that of the minimum spanning tree.  Is there a (nearly) matching upper bound?

\paragraph{Open problem: $O(k)$-robust spanners.}

Another fundamental open problem has to do with the number of edges needed
in an $O(k)$-robust $t$-spanner.  We have no upper-bound better than the
trivial $O(n^2)$ and the only lower-bound is $\Omega(n\log n)$.  This is
true even if we restrict our attention to constructing a $t$-spanner
for the 1-dimensional point set $V=\{1,\ldots,n\}$.

\paragraph{Open problem: Induced spanners.}

Finally, we observe that the one-dimensional constructions of
$f(k)$-robust spanners actually satisfy a property that is slightly
stronger than $f(k)$-robustness:  For each of these, the graph $G\setminus
S^+$ is a $t$-spanner.  In other words, vertices not in $V\setminus
S^+$ are not needed in the short paths between pairs of vertices in
$V\setminus S^+$.  Our $d$-dimensional constructions do not have this
stronger property.  It would be interesting to know if $d$-dimensional
constructions having this stronger property exist.

\section*{Acknowledgement}

This work was partly funded by NSERC and CFI.

The research in this paper was started at the workshop on \emph{Models
of Sparse Graphs and Network Algorithms (12w5004)}, hosted at the
Banff International Research Station (BIRS), February 5--10, 2012.
The authors are grateful to the organizers, \mbox{Nicolas~Broutin},
\mbox{Luc~Devroye}, and \mbox{G\'abor~Lugosi}, the other participants, and the staff
at BIRS, for providing a stimulating research environment.

\bibliographystyle{siam}
\bibliography{giant}

\end{document}